\newcommand{\always}{\Box}
\newcommand{\eventually}{\Diamond}
\newcommand{\until}[1]{\ \mathsf{U}_{#1}\ }
\newcommand{\weakuntil}{\ \mathsf{W}\ }
\newcommand{\Paths}{\mathit{Paths}}
\newtheorem{definition}{Definition}
\newtheorem{theorem}{Theorem}
\newtheorem{lemma}{Lemma}
\title{Modeling and Analyzing Adaptive User-Centric Systems in
  Real-Time Maude\thanks{This work has been partially sponsored by the EC project
REFLECT, IST-2007-215893.}} 
  \author{Martin Wirsing\quad \quad Sebastian
  S. Bauer \quad \quad Andreas Schroeder 
  \institute{Institut f{\"u}r
    Informatik\\Ludwig-Maximilians-Universit{\"a}t M{\"u}nchen\\
    Munich, Germany}
\email{\{wirsing,bauerse,schroeda\}@pst.ifi.lmu.de} }
\begin{document}
\maketitle

\begin{abstract}
Pervasive user-centric applications are systems which are meant to sense the presence, mood, and intentions of users in order to optimize user comfort and performance or to assist people in their specific activities. 
Building such applications requires not only 
state-of-the art techniques from artificial intelligence but also sound software engineering methods for facilitating modular design, runtime adaptation and verification of critical system requirements.  

In this paper we focus on high-level design and analysis, and use the algebraic rewriting language Real-Time Maude  for specifying applications in a real-time setting. We propose a component-based approach for modeling pervasive user-centric systems in a generic way and show how to instantiate the generic rules for a simple
out-of-home digital advertising application and how to analyze and prove crucial properties of the system architecture through model checking and simulation. For proving time-dependent properties of systems we use 
Metric Temporal Logic (MTL) and present analysis algorithms for model checking two subclasses of MTL formulas: time-bounded response and time-bounded safety MTL formulas.   The underlying idea is to extend the Real-Time Maude model with suitable clocks and to transform the MTL formulas into  LTL formulas over the extended specification. This makes it possible to use the LTL model checker of Maude for verifying real time system properties. 
It is shown that component-based Real-Time Maude specifications as well as their extensions by clocks are time-robust and finite state; moreover, the above classes of formulas are tick-stabilizing if their atomic propositions are tick-stabilizing. As a consequence, model checking analyses are sound and complete for maximal time sampling.

The approach is illustrated by a simple adaptive advertising scenario in which an adaptive advertisement display can react to actions of the users in front of the display.

\emph{Keywords:} Component-based software engineering, reconfiguration, algebraic specification, term rewriting, Real-Time Maude, real-time temporal logic
\end{abstract}

\section{Introduction}\label{sec:introduction}

As we are moving on from desktop computers to a pervasive computing
intelligence interwoven in the ``fabric of everyday life'', our
environment is about to become enriched with more and more smart
assistance systems.

Through this transformation, it becomes feasible for IT systems in our
environment to measure responses of
the user's body through sensors and cameras and to influence our physical, emotional and cognitive state for our, the users, benefits. We call such systems pervasive user-centric applications \cite{satyanarayananm2001}.
Examples are a so called ``mood player'' which selects the music according to current mood of a person, a ``driving assistant'' which implements adaptive control in vehicles to achieve more secure, more
pleasant and more effective driving, or ``adaptive advertising'' where the displayed content of an advertisement is dynamically adapted to the needs of the actual audience in front of the display~\cite{reflect-project}.


Building pervasive user-centric applications is not easy, and requires
state-of-the art techniques from artificial intelligence including
machine learning and probabilistic reasoning, as well as a lot of system calibration and experimental psychological research  in order to determine the right sensor parameters for recognizing the mood or the cognitive state of a person. As a consequence, from the software engineering point of view it is important that such systems are easily changeable and adaptable at runtime; moreover, they need to react immediately to the behavior of the user and thus have to satisfy (soft) real-time constraints. In the REFLECT project~\cite{reflect-project} we have developed a  component-based framework ~\cite{ucpa09:beyer-hammer-kroiss-schroeder} which facilitates modular design, runtime adaptation and reconfiguration of systems and supports the implementation of pervasive user-centric applications (such as the ones mentioned above) in a flexible way. 

In this paper we focus on the high-level design and analysis of pervasive user-centric applications in order to be able to make guarantees on the correct behavior of such systems in an early stage of development. We follow the algebraic paradigm based on term rewriting and use Real-Time Maude as a high-level formal modeling language for pervasive user-centric applications in a real-time setting. 

In line with the REFLECT framework we propose a component-based approach for modeling pervasive user-centric applications in a generic way and show how to instantiate the generic rules for a simple
out-of-home digital advertising application and how to analyze and prove crucial properties of the system architecture through model checking and simulation.   

In our approach components are considered to be black boxes, making explicit only
their communication requirements by means of \emph{required} and
\emph{provided} ports. A system \emph{configuration} comprises a number of \emph{components} and
 \emph{connectors}, which describe how the required ports are
connected to suitable provided ports. 
We distinguish three kinds of components: basic components, timed components whose behavior is influenced by timers, and hierarchical components (often just called components)  which typically contain other components and connectors as well as timers. Generic rules are defined for transmitting values along connectors as well as for time elapse and the specifications of different kinds of timers.

Individual components provide parts of the
functionality required by the entire system. By changing connectors, adding
and removing individual components, the system's behavior can be
changed at runtime. This process is called \emph{dynamic reconfiguration}. Since entire
components are replaced, little code needs to be added to the
components to achieve this kind of adaptivity. Instead, it is attained
on the level of the system architecture. In our approach, timed
\emph{monitor} components survey the behavior of the 
system and the environment, and trigger reconfigurations if necessary.


For proving properties of systems we use 
Metric Temporal Logic (MTL) \cite{DBLP:journals/rts/Koymans90}. This is an 
extension of  Linear Temporal Logic (LTL) \cite{kroegerTL} for specifying
 \emph{timed} properties.  Currently, 
Real-Time Maude does not provide an MTL model
checker. However, in previous works, cf.,
e.g.~\cite{DBLP:conf/snpd/Olveczky08}, {\"O}lveczky showed how to verify some simple MTL formulas by 
using the time-bounded search command of Real-Time Maude or the LTL
model checker of Maude. In
\cite{rtrts-Olveczky}, Lepri et al.\ present an automatized analysis algorithm of
two important classes of MTL formulas, namely the bounded response
property $\always(p \to (\eventually_{\leq b} q))$ and the minimum
separation property $\always(p \to (p\weakuntil (\always_{\leq b} \neg
p)))$. The underlying idea is to extend a Real-Time Maude model by a suitable clock and to transform the MTL formulas into  LTL formulas over the extended specification.  Then the LTL
model checker of Maude can be used for performing the analysis.

In this paper, we extend these ideas and present analysis algorithms for two further
and more general classes of MTL formulas:
\begin{enumerate}
\item Generalized time-bounded response: $\always(\bigvee_{i\in I}(\eventually_{\leq b_i} q_i))$ for $I = \{1,2,\ldots,n\}\subset \mathbb N$ a finite set of indices, and
\item Time-Bounded safety: $\always(p \lor \always_{\leq b} q)$
\end{enumerate}
(where $q_i$, $q$, and $p$ are all atomic propositions).

We show that component-based Real-Time Maude specifications as well as their extensions by clocks are time-robust and finite state; moreover, the above classes of formulas are tick-stabilizing if their atomic propositions are tick-stabilizing. As a consequence (cf.~\cite{DBLP:journals/entcs/OlveczkyM07a}, model checking analyses are sound and complete for maximal time sampling.

Throughout the paper we  illustrate our modeling and analysis techniques by a simple scenario of adaptive advertisement.
 
The paper is organized as follows: In Section~\ref{sec:advertising} we
present the adaptive advertising case study which we use as running
example. The following Section~\ref{sec:rtmaude} contains a short
introduction to Real-Time Maude. Sections~\ref{sec:modeling}
and~\ref{sec:analysis} present our main results. In
Section~\ref{sec:modeling} we explain our generic framework for
specifying component-based systems and the Real-Time Maude
specification of the adaptive advertisement scenario. The
transformation algorithms for the time-bounded response and
time-bounded safety formulas are presented in
Section~\ref{sec:analysis}; we show also completeness and termination
of LTL model checking for our format of component-based specifications
and illustrate our results by applying the Maude model checker
successfully to the requirements of the adaptive
advertisement scenario. In Sections~\ref{sec:related}
and~\ref{sec:concluding} we discuss related work, summarize our
results and discuss further work.

\section{An adaptive advertising application}\label{sec:advertising}

To showcase our approach to system verification, we consider a simple
out-of-home digital advertising application~\cite{advertising09:beyer-mayer-kroiss-schroeder}. The setup of this
application consists of a large display screen and a camera monitoring
the area in front of the display, and by this allowing interactions of
passer-bys with the displayed content. The general idea of adaptive
advertising is to adapt a displayed advertisement to the current
situation in front of it -- whether there are several people just
passing by, a small group of persons watching the ad carefully, or
just one person in front of it waiting for someone else
\cite{advertising09:beyer-mayer-kroiss-schroeder}. In this simplified example, we consider the
camera as a way to enable gesture-based interactions with a passer-bys
and to discover their presence.

A simple scenario within the adaptive advertising setting is an adaptive car
advertisement, reacting to gestures of users in front of the display: By
moving around the display, pointing at items or looking at them, the users
influence the contents of the ad.
To function properly this system should satisfy the following two requirements: 
(G1) Being an interactive ad, the system should react to a
user in front of the display. (G2) The content displayed must change
at least every ten seconds: an advertising campaign using a
large-scale display should not waste its capabilities by showing
static content.

\begin{figure}[t]%
  \resizebox{\columnwidth}{!}{\input 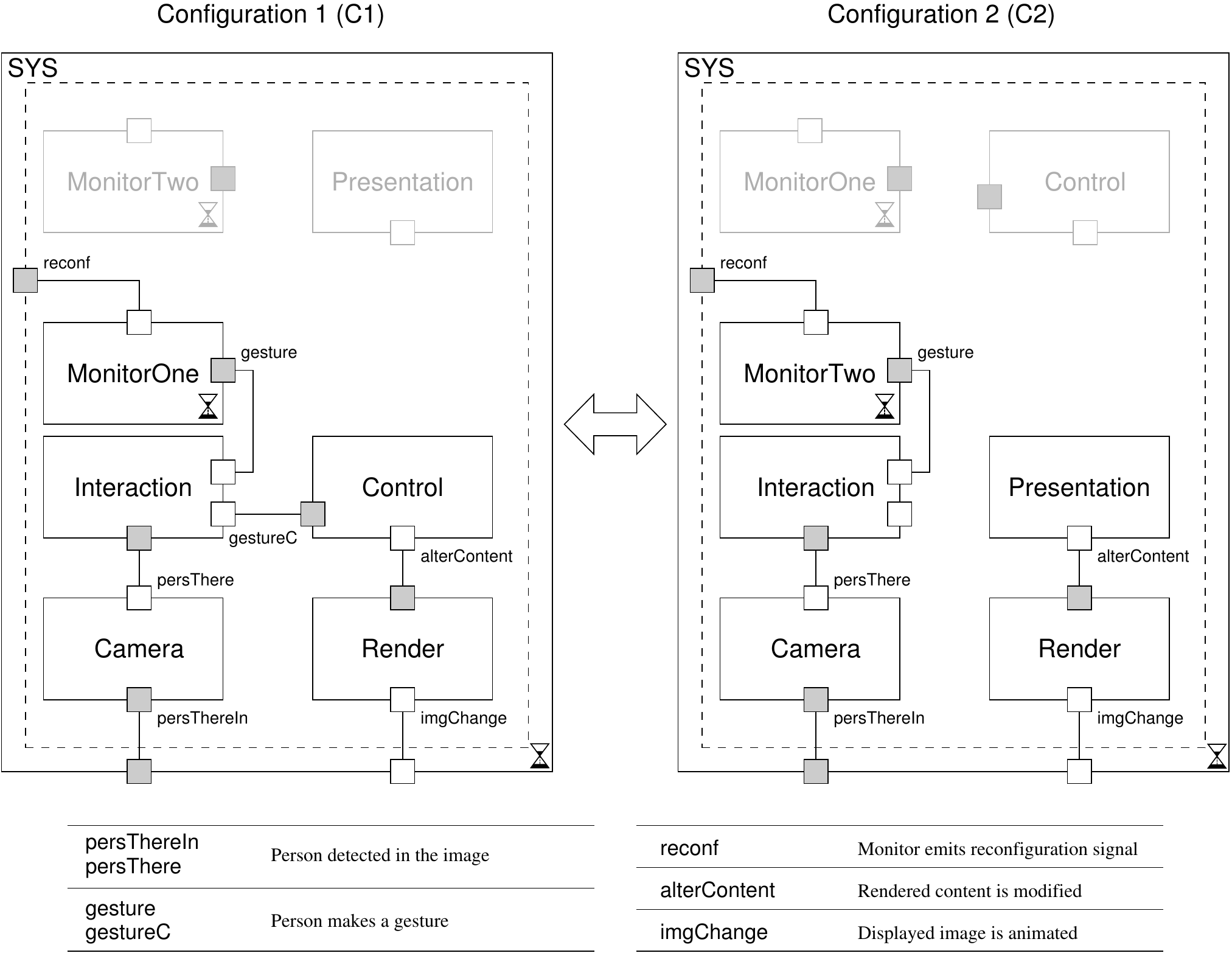_t }%
  \caption{Adaptive advertising system configurations}%
  \label{fig:reconfig-sys}%
\end{figure}

The realization shown in figure~\ref{fig:reconfig-sys} (components
that are present but inactive are shown in light-gray) is first
deployed in interactive mode and monitors whether someone is
interacting with the ad. If that is not the case, a reconfiguration is
triggered which altering the system configuration so that it shows
auto-active content generated by a presentation component, e.g.~an
advertising movie or predefined animation sequences. Introducing
monitors allows a partial solution to assume that the environment
exhibits certain features (e.g.~always have someone interacting with
the ad) that it does not exhibit in the general case. Note that the
second system (figure~\ref{fig:reconfig-sys}, right) also needs
monitoring, as it again does not satisfy (G1): The second system
provides interactive content to its viewers, and therefore must be
changed as soon as a person is in front of the display, interacting
with it. 

Reconfiguration leads to further requirements; in particular, system configurations should reasonably stable so that the system does not oscillate between several configurations. This can be expressed as follows: (G3) a reconfiguration should not happen instantaneously, but
must take at least 200 ms to complete.

\section{Real-Time Maude}\label{sec:rtmaude}

Real-Time Maude \cite{DBLP:journals/lisp/OlveczkyM07} is a formal
specification language based on Maude
\cite{DBLP:conf/maude/2007}, a high-performance simulation and model
checking tool which uses rewriting logic and membership equational
logic for the specification of systems. Real-Time Maude extends Maude
by supporting the formal specification of \emph{real-time} system
while benefiting from the expressiveness of the Maude language and
the powerful analysis techniques like LTL model checking. In this
section, we will briefly introduce the main concepts of specifications
in Real-Time Maude; we refer to \cite{DBLP:journals/lisp/OlveczkyM07}
for more details on the syntax and semantics of Real-Time Maude.

In Real-Time Maude, real-time systems are formally specified by a
real-time rewrite theory of the form $\mathcal R = (\Sigma, E, IR,
TR)$ where $(\Sigma, E)$ is a membership equational logic
\cite{DBLP:conf/maude/2007} theory with $\Sigma$ a signature and $E$ a
set of confluent and terminating conditional equations, and $IR$ is a
set of instantaneous (rewrite) rules specifying the system's
transitions which happen in zero time. Instantaneous rules are written
\[\small \texttt{crl [$l$] : $t$ => $t'$ if $cond$ .}\] where $l$ is a label, 
$t$, $t'$ are terms, and $cond$ is a condition on the terms $t$, $t'$.
Finally, $TR$ is a set of tick (rewrite) rules which specify how the
system behaves when time advances. Tick rules are written \[\small \texttt{crl
  [$l$] : \{$t$\} => \{$t'$\} in time $T$ if $cond$ .}\] where
\texttt{\{\_\}} is a constructor of sort \verb|GlobalSystem|, and $T$
is a term of sort \verb|Time| which denotes the duration of the tick
rule. The form of the tick rules ensure that time advances uniformly
in the whole system.

A one-step rewrite, written $t \xrightarrow{r} t'$, is a single
rewrite of a term $t$ to a term $t'$ (both of sort
\verb|GlobalSystem|) in time $r$ (possibly zero time). We call $t$ the source
state of the rule $t \xrightarrow{r} t'$, and $t'$ the target state.
A (timed) path in $\mathcal R$ is an infinite sequence $\pi = t_0
\xrightarrow{r_0} t_1 \xrightarrow{r_1} t_2 \ldots$ where either for
all $i \in \mathbb N$, $t_i \xrightarrow{r_i} t_{i+1}$ is a one-step
rewrite of $\mathcal R$, or there exists a $k \in \mathbb N$ such that
for all $0 \leq i < k$, $t_i \xrightarrow{r_i} t_{i+1}$ is a one-step
rewrite in $\mathcal R$ and there is no one-step rewrite from $t_k$ in
$\mathcal R$, and $t_j = t_k$ and $r_{j-1}=0$ for each $j > k$. The
set of all timed paths of $\mathcal R$ starting in $t$ is denoted by
$\Paths(\mathcal R)_{t}$. For $k \in \mathbb N$, we define $\pi^k$ to
be the timed path starting after the $k$th one-step rewrite,
i.e.~$\pi^k=t_k \xrightarrow{r_k} t_{k+1} \xrightarrow{r_{k+1}}
t_{k+2}\ldots$.  A term $t'$ is reachable from a term $t$ in $\mathcal
R$ in time $r$ if there is a path $\pi = t_0 \xrightarrow{r_0} \ldots
\xrightarrow{r_{k-1}} t_k \xrightarrow{r_{k}} \ldots$ such that $t_k =
t'$ and $\sum_{i=0}^{k-1}r_i$.

Function symbols $f$ are declared by the statement \texttt{$f$\ :\
  $s_1 \ldots s_n$ -> $s$} with sorts $s_1,\ldots,s_n,s$, and
equations are written \texttt{eq $t$ = $t'$}. A variable $x$ of sort
$s$ is declared by the statement \texttt{var $x$\ :\ $s$}.

In \emph{object-oriented} Real-Time Maude, classes are declared by
\[\small \texttt{class\ $C$\ |\ $att_1$\ :\ $s_1$,\ $\ldots$,\ $att_n$\ :\
  $s_n$\ .}\]
where $att_1, \ldots, att_n$ are attributes of sorts $s_1, \ldots,
s_n$, respectively. An object of class $C$ is written as a term
\[\small \texttt{<\ $o$\ :\ $C$\ |\ $att_1$\ :\ $val_1$,\ $\ldots$,\ $att_n$\
  :\ $val_n$\ >}\] of
sort \texttt{Object}, where $o$ is an object identifier of sort
\texttt{Oid} and $val_i$ are the current values of attributes $att_i$
($1 \leq i \leq n$). A system state is a collection of
objects\footnote{In Real-Time Maude, states in object-oriented
  specifications usually contain messages which are used to model
  communication between objects. However, in our case study, we will
  not make use of messages and let objects ``directly'' communicate,
  i.e.~the effect of communication between two objects is modeled by a
  rewrite rule having both objects in source and target state.} and is
of sort \verb|Collection| which is a multiset equipped with an associative and
commutative union operator with empty syntax, e.g.
\[\small \texttt{<\ $o$\ :\ $C$\ |\ $att_1$\ :\ $val_1$,\ $\ldots$,\ $att_n$\ :\ $val_n$\ >
  <\ $o'$\ :\ $C'$\ |\ $att_1'$\ :\ $val_1'$,\ $\ldots$,\ $att_m'$\ :\ $val_m'$\ > $\ldots$}\]
represents a system state consisting of the objects $o$, $o'$,
... . Real-Time Maude supports \emph{multiset rewriting}, i.e.~rewrite rules
are applied modulo associative and commutative rewriting of the system
state.  In object-oriented Real-Time Maude specifications, the
time-dependent behavior is usually specified by a single \emph{tick rule}
of the form\\[0.3cm]
\small
\verb|var C : Configuration . var T : Time .|\\
\verb|crl [tick] : {C} => {delta(C,T)} in time T if T <= mte(C) /\ | $cond$ \verb| [nonexec] .|\\[0.3cm]
\normalsize 
The function \verb|delta| defines the effect of time
  elapse on a configuration, and the function \verb|mte| defines the
  maximum amount of time that can elapse before some action must take
  place. These functions distribute over the objects in a
  configuration and must be defined for all single objects to defined
  the timed behavior of a system. The tick rule advances time
  nondeterministically by any amount \verb|T| less than or equal to
  \verb|mte(C)|. To execute such rules, Real-Time Maude offers a
  number of time sampling strategies, so that only some moments in
  time are visited. In this paper, we will only make use of the
  maximal time sampling strategy which advances time to the next
  moment when some action must be taken, as defined by \verb|mte|,
  i.e.~when the tick rule is applied in a state $\{\tt C\}$, time is
  advanced by \verb|mte(C)|.
  The above form of the tick rule slightly
  differs from the usual form proposed in
  \cite{DBLP:journals/lisp/OlveczkyM07} by allowing an additional
  condition $cond$ (\verb|T| must not occur in $cond$).

  A Real-Time Maude specification is executable and various formal
  analysis methods are supported. For a complete overview of these
  methods see, e.g.,~\cite{DBLP:journals/lisp/OlveczkyM07}. In this work we
  make use of the \emph{time-unbounded} model checking command
  \[\texttt{(mc $t$ |=u $\phi$ .)}\]
  for an initial state $t$ and a temporal logic formula $\phi$.

  In the rest of this paper, when we talk of a real-time rewrite
  theory $\mathcal R$ we typically mean the real-time rewrite theory
  $\mathcal R^{max}$ which is obtained from $\mathcal R$ by applying
  the theory transformation corresponding to using the maximal time
  sampling strategy when executing the tick rules.

\section{Modeling in Real-Time Maude}\label{sec:modeling}
In this section we show how our case study, the digital 
advertising application, as
described in Sect.~\ref{sec:advertising},
can be modeled in Real-Time Maude. For this purpose, we first
present in Sect.~\ref{sec:components} an implementation of a generic,
port-based component model in object-oriented Real-Time Maude. Then we
show in Sect.~\ref{sec:modeling_example} how our case study described
in Sect.~\ref{sec:advertising} can be modeled as self-reconfiguring
component-based system.


\subsection{Defining Components in Object-Oriented Real-Time
  Maude}\label{sec:components}

Components are encapsulated entities with explicit ports over which
communication take place. A port is modeled as an object instance of
the class \verb|Port| having one attribute \verb|value|
describing the current state of the port.
\begin{alltt}\small
  class Port | value : Bool .
\end{alltt}
The value of a port models the state of activity: a value \verb|true|
models the fact that at this port, a signal is received (or sent) via
this port whereas \verb|false| means that no signal is received (or
sent).

A port always belongs to a unique component and may have two roles:
either it is a \emph{provided} port or a \emph{required} port of this
component. All provided ports are under the control of the owning
component and therefore, the state (i.e.~the value) of a provided port
can only be changed by its owning component. In contrast, the value of
a required port cannot be changed by the owning component but can only
be changed by the environment -- in this sense, a component can only
react on different states if its required ports.

We introduce three different types of components: \emph{basic
  components}, \emph{timed components}, and \emph{hierarchical
  components} (or just \emph{component}). For these three different
types of components we introduce the following (sub-)classes using
inheritance, i.e.~timed component inherits from basic component, and
component inherits from timed component.
\begin{figure}[t]
  \centering {\input 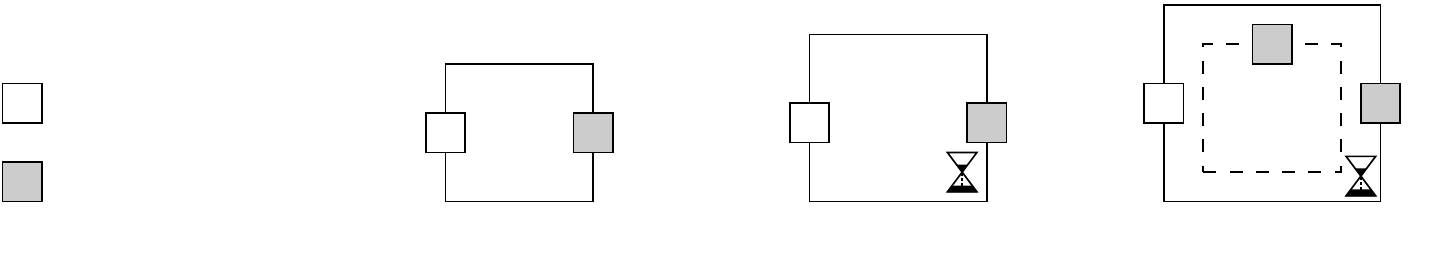_t }
  \caption{Ports, Components}
  \label{fig:components}
\end{figure}
\begin{alltt}\small
  class ABasicComponent | prov     : Configuration, req      : Configuration .
  class ATimedComponent | tstate   : Configuration .
  class AComponent      | assembly : Configuration, innerreq : Configuration .
  subclass ATimedComponent < ABasicComponent .
  subclass AComponent      < ATimedComponent .
\end{alltt}
The different types of components are also illustrated in
Fig.~\ref{fig:components}.  A basic component has a set of provided
(\verb|prov|) and required ports (\verb|req|), and both attributes are
modeled as instances of type \verb|Configuration|. However, we assume
that the multisets \verb|prov| and \verb|req| only contain object
instances of type \verb|Port|. A timed component inherits from basic
component and has an additional attribute (\verb|tstate|) which models
a timed data state which need \emph{not} be time invariant. For timed
data states we refer the reader to the end of this section. Finally,
hierarchical components embody an inner assembly (\verb|assembly|) of
connectors and components, and inner required ports (\verb|innerreq|)
that are connected to provided ports of components within the inner
assembly.

For each type of component we actually introduce two classes, e.g.~for
basic components, we define an abstract class \verb|ABasicComponent|
and a concrete class \verb|BasicComponent| and
only allow object instances of the concrete class. This discrimination between abstract 
and concrete classes allows to define both common rewrite rules and equations applicable 
to all component types, and rules and equations applicable to a particular type of components 
only. The class definitions are hence as follows:
\begin{alltt}\small
  class BasicComponent .
  class TimedComponent .
  class Component .
  subclass BasicComponent < ABasicComponent .
  subclass TimedComponent < ATimedComponent .
  subclass Component      < AComponent .
\end{alltt}

For timed components (and hence for hierarchical components), a timed
data state (attribute \verb|tstate|) is a set of timers which
decrement their value by the advanced time. In our case study later
on, we need three different types of timers, all modeled as classes.
\begin{alltt}\small
  class Timer      | value : TimeInf .
  class OnOffTimer | value : TimeInf, active : Bool .
  class DelayTimer | value : TimeInf, delay  : TimeInf .
\end{alltt}
The class \verb|Timer| models a simple timer which has an attribute
\verb|value| for the current time value of type \verb|TimeInf|. The
class \verb|OnOffTimer| has an additional attribute \verb|active| to
switch the timer on and off. Finally, the third class
\verb|DelayTimer| is another timer class which contains -- beside the
timer value -- an attribute \verb|delay|. If the \verb|DelayTimer|
expires the timer value is reset to the fixed delay.

Components communicate over their required and provided ports which
are connected by \emph{connectors}. More precisely, we distinguish
between two types of connectors: On the one hand, a class
\verb|Connector| models all connectors which link a provided port with
a required port on the same level of the component hierarchy (i.e.~not
crossing component boundaries). On the other hand, a class
\verb|DelegateConnector| models all delegate connectors which link,
within hierarchical components, ports in the assembly with either
outer ports or inner required ports.

An overview of all types of connectors is given in
Fig.~\ref{fig:connectors}.
\begin{figure}[t]
  \centering \input 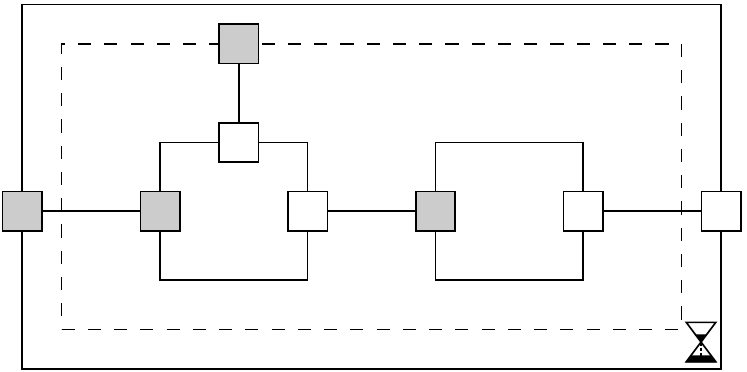_t
  \caption{Connectors and Delegate Connectors}
  \label{fig:connectors}
\end{figure}
\begin{enumerate}[(1)]
\item A connector links a provided and required port, on the same
  level of component hierarchy.
\item A delegate connector links either
  \begin{enumerate}[(a)]
  \item a provided port of an inner component (source) with an outer
    provided port (target), or
  \item a provided port of an inner component (source) with an inner
    required port of the comprising hierarchical component (target),
    or
  \item an outer required port (source) with a required port of an
    inner component (target).
  \end{enumerate}
\end{enumerate}

Connectors are again modeled as object-oriented classes
\verb|Connector| and \verb|DelegateConnector|, each having two
attributes \verb|source| and \verb|target| which are object
identifiers of instances of class \verb|Port|.
\begin{alltt}\small
  class Connector | source : Oid, target : Oid .
  class DelegateConnector | source : Oid, target : Oid .
\end{alltt}


Component behavior is modeled in an abstract way by defining an
operation \verb|beh| on configurations.
\begin{alltt}\small
  op beh : Configuration -> Configuration [frozen (1)] .
\end{alltt}
The behavior of a component can be defined then by introducing
equations for \verb|beh|. This abstract behavior operation is used in
the following generic (instantaneous) rewrite rules.

We define generic, instantaneous rules for transmitting values along
connectors. The following rule \verb|[transmit]| assumes two
(arbitrary) components with connected provided and required ports; if
the value of the provided port is not equal to the required port
value, then the latter value is changed accordingly such that
afterwards, both connected ports have equal values. A necessary condition
for this rule is that the target component which alters one of its ports
is in a consistent state. A (hierarchical) component is called consistent 
if within its assembly, all connected ports are equal in value.

\begin{alltt}\small
crl [transmit] :
	   < o1 : ABasicComponent | prov : < p1 : Port | value : b > PORTS >
	   < c : Connector | source : p1, target : p2 >
	   < o2 : ABasicComponent | req : < p2 : Port | value : b' > PORTS' >
	  =>
	   < o1 : ABasicComponent | prov : < p1 : Port | value : b > PORTS >
	   < c : Connector | source : p1, target : p2 >
	   beh(< o2 : ABasicComponent | req : < p2 : Port | value : b > PORTS' >)
	if b =/= b' and
	   consistent(< o2 : ABasicComponent | req : < p2 : Port | value : b' > PORTS' >) .
\end{alltt}

Note that in the above rule, the abstract operation \verb|beh|
modeling the component behavior is called on the receiving component
which may react on its new state, more precisely, the altered state of
its required ports. It is also worth mentioning that this rule is
defined uniformly on all types of components by using
\verb|ABasicComponent|; every ``concrete'' component (of type
\verb|BasicComponent|, \verb|TimedComponent|, \verb|Component|) is a
subclass and hence inherits this rewrite rule.

For propagation of port values along connectors, each type of
connector has its own behavior modeled as a rule. The rule
\verb|[transmit]| models the behavior of a connector, and since we
have three different types of delegate connectors, we have also three
more rules which do not described here in detail: A rule
\verb|[delegateIn]| equals inner required port with outer required
port, a rule \verb|[delegateOut]| equals outer provided port with
inner provided port, and a rule \verb|[delegateInnerPort]|
equals an inner required port of the comprising hierarchical component
with an inner required port (in the assembly).

Finally, we have the tick rule which advances time up to the maximal
possible amount of time determined by the function \verb|mte|.
{\small \begin{verbatim}
var C : Configuration .   var T : Time .
crl [tick] : {C} => {delta(C,T)} in time T if T <= mte(C) /\ consistent(C) [nonexec] .
\end{verbatim}}

It is important to point out that we only let time advance if the
system is in a consistent state, i.e.~the term \verb|consistent(C)|
evaluates to \verb|true| if and only if all connected ports are equal
in value. The functions \verb|delta|, which models the effect of time
elapse on the system state, and \verb|mte|, which for a system state
returns the maximal possible time elapse, are defined as usual in
object-oriented specifications in Real-Time Maude
(cf.~\cite{DBLP:journals/lisp/OlveczkyM07}), e.g.~for the class
\verb|OnOffTimer|, we have the following equations:

{\small \begin{verbatim}
eq delta(< o : OnOffTimer | value : t, active : b >, T)
 = < o : OnOffTimer | value : if b then t monus T else t fi, active : b > .

eq mte(< o : OnOffTimer | value : t, active : b >)
 = if b then t else INF fi .
\end{verbatim}}

\subsection{Modeling the Digital Advertising Application}\label{sec:modeling_example}
We show now how the adaptive advertising application of
Sect.~\ref{sec:advertising} can be modeled as a component-based system
in Real-Time Maude by extending the implementation introduced so far.

The system, cf.\ Fig.~\ref{fig:reconfig-sys}, is modeled as a
hierarchical component with object identifier \verb|SYS|. It has one
outer provided port \verb|SYS.imgChange| and one outer required port
\verb|SYS.persThereIn|. For receiving a reconfiguration signal from
the inner assembly the system component contains an inner required
port \verb|SYS.reconf|. The (timed) state consists of a timer
\verb|reconftimer| which is used to trigger reconfiguration of the
component after a (fixed) time delay. The inner assembly comprises
delegate connectors (e.g.\ \verb|d1| connecting \verb|SYS.persThereIn|
and \verb|Camera.persThereIn|), connectors (e.g.\ \verb|c1| connecting
\verb|Camera.persThere| and \verb|Interaction.persThere|), basic
components (e.g.\ \verb|Render|), and timed components
(\verb|MonitorOne| and \verb|MonitorTwo|). Each monitor is equipped
with a timer, more precisely, an object instance of the class
\verb|OnOffTimer|. The purpose of these timers is to count the time
when the condition for reconfiguration to the other configuration is
true.
\begin{alltt}\small
op SYS_in_C1 : -> Configuration [ctor] .
eq SYS_in_C1 = < SYS : Component |
                   prov : < SYS.imgChange : Port | value : true >,
                   req : < SYS.persThereIn : Port | value : true >,
                   tstate : < reconftimer : Timer | value : INF >,
                   innerreq : < SYS.reconf : Port | value : false >,
                   assembly :
                     < d1 : DelegateConnector | source : SYS.persThereIn,
                                                target : Camera.persThereIn >
                     ...
                     < c1 : Connector | source : Camera.persThere,
                                        target : Interaction.persThere >
                     ...
                     < Render : BasicComponent |
                        prov : < Render.imgChange : Port | value : true >,
                        req  : < Render.alterContent : Port | value : true > >
                     < Presentation : BasicComponent | 
                        prov : < Presentation.alterContent : Port | value : true >,
                        req  : none >
                     ...
                     < MonitorOne : TimedComponent |
                        tstate : < m1timer : OnOffTimer | value : INF, active : true >,
                        prov : < MonitorOne.reconf : Port | value : false >,
                        req  : < MonitorOne.gesture : Port | value : true > >
                     ... > .
\end{alltt}
The component behavior is defined via equations for the operation
\verb|beh|. Note that \verb|beh| applied to components which have
altered the port values of required ports and have to react
accordingly (e.g.\ \verb|beh| is called in the rule \verb|[transmit]|,
cf.\ Sect.~\ref{sec:components}). For instance, for component
\verb|Render| we define \verb|beh| by propagating the new value of the
required port (which has been previously changed by rule
\verb|[transmit]|) to the provided port.\footnote{In Real-Time Maude,
  object instances in terms need not list all attributes; it is valid
  to omit attributes which are not relevant.}
\begin{alltt}\small
eq beh(< Render : BasicComponent |
            prov : < Render.imgChange    : Port | >,
            req  : < Render.alterContent : Port | value : b' > >)
     = < Render : BasicComponent |
            prov : < Render.imgChange    : Port | value : b' >,
            req  : < Render.alterContent : Port | > > .
\end{alltt}
The behavior of the monitor \verb|MonitorOne| can be defined
similarly, with a more involved behavior of the
timer. \verb|MonitorOne| is active in configuration C1 and surveys
whether there is a person in front of the display. If there is no
person in front (\verb|MonitorOne.gesture| becomes false), the timer
is initialised with 2000. Otherwise, if the valuation of port
\verb|MonitorOne.gesture| has changed from false to true and the timer
has not expired already, the timer is set to infinite (\verb|INF|), i.e.\
if time advances (by a finite amount of time), the timer is not decreased.
\begin{alltt}\small
eq beh(< MonitorOne : TimedComponent |
           tstate : < m1timer : OnOffTimer | value : t, active : B >,
           req    : < MonitorOne.gesture : Port | value : b > >)
     = < MonitorOne : TimedComponent |
           tstate : < m1timer : OnOffTimer | value : if (b or (not B)) and t =/= 0
                                                      then INF else (if t == INF
                                                        then 2000 else t fi) fi > .
\end{alltt}
Note that the timer \verb|m1timer| of \verb|MonitorOne| is not touched
if the timer has expired, i.e.\ has 0 as value. In this case, for a
period of 2000 ms, the port valuation of
\verb|MonitorOne.gesture| has been false. To guarantee the overall
system guarantee (G2) which says that the display's content should
change at least every ten seconds, the system must be
reconfigured. Reconfiguration is signaled to the component \verb|SYS|
by setting the value of the port \verb|MonitorOne.reconf| to
\verb|true|.
\begin{alltt}\small
rl [monitorOne-signal] :
  < MonitorOne : TimedComponent |
      tstate : < m1timer : OnOffTimer | value : 0, active : true >,
      prov   : < MonitorOne.reconf : Port | value : false > >
 =>
  < MonitorOne : TimedComponent |
      tstate : < m1timer : OnOffTimer | value : INF, active : false >,
      prov   : < MonitorOne.reconf : Port | value : true > > .
\end{alltt}
This value of the reconfiguration port of the monitor is propagated
to the component \verb|SYS| which then sets its reconfiguration timer to
250 ms which models the duration of the reconfiguration process.
\begin{alltt}
eq beh(< SYS : Component |
          tstate : < reconftimer : Timer | value : t >,
          innerreq : < SYS.reconf : Port | value : true > >)
     = < SYS : Component |
          tstate : < reconftimer : Timer |
                       value : if t == INF then 250 else t fi > > .
\end{alltt}
The reconfiguration is performed as soon as the timer expires:
The reconfiguration timer is set to \verb|INF|,
the connectors of configuration 1 are replaced by the connectors
configuration 2, and, moreover, the second monitor is activated which
must observe the component to trigger a reconfiguration back to configuration 1
if the assumptions of configuration 2 are not met any more.
\begin{alltt}\small
rl [reconf-C1-to-C2] :
  < SYS : Component | tstate : < reconftimer : Timer | value : 0 >,
     assembly : ... {\it connectors of configuration 1\/} ...
       < MonitorTwo : TimedComponent |
          tstate : < m2timer : OnOffTimer | value : INF, active : false >, ... >
       ... > 
 =>
  < SYS : Component | tstate : < reconftimer : Timer | value : INF >,
     assembly : ... {\it connectors of configuration 2\/} ...
       < MonitorTwo : TimedComponent |
          tstate : < m2timer : OnOffTimer | value : INF, active : true >, ... >
	    ... > .
\end{alltt}
We omit the rest of the rewrite rules for the remaining components and
their equations for \verb|beh| defining their behavior.  They are in
fact all very similar to the rules presented above. It is, however,
worth mentioning that the timer of \verb|MonitorTwo| is set to 500
ms as soon as there is a person detected in front of the
display.

So far we have defined the term \verb|SYS_in_C1| of sort
\verb|Configuration| and introduced appropriate rewrite rules and
equations which model the behavior of the hierarchical
component. However, it is an open component with a required port
\verb|SYS.persThereIn|; the overall behavior of component \verb|SYS|
depends on how the valuation of \verb|SYS.persThereIn| evolve over
time. To allow analysis of the system, we add a component \verb|ENV|
which models the environment; it has two ports which are connecting to
their counterparts of the component \verb|SYS|, thus yielding a closed
system. The initial state is then defined as:
\begin{alltt}\small
  op initial : -> Configuration [ctor] .
  eq initial =
    SYS_in_C1
    < ENV : TimedComponent | prov   : < ENV.persThereIn : Port | value : true >,
                             req    : < ENV.imgChange : Port | value : true >,
                             tstate : < envdtimer : DelayTimer | value : 0,
                                                                 delay : 50 > >
    < CONN1 : Connector | source : ENV.persThereIn, target : SYS.persThereIn >
    < CONN2 : Connector | source : SYS.imgChange, target : ENV.imgChange > .
\end{alltt}
The behavior of the environment is as follows:
Every 50 ms the environment non-deterministically choose whether
to change the valuation of port \verb|ENV.persThereIn|, or not.
This recurring choice after 50 ms is modeled by a delay timer, always resetting the timer after each choice.
\begin{alltt}\small
rl [env-true] :
  < ENV : TimedComponent | tstate : < envdtimer : DelayTimer | value : 0, delay : 50 > >
 => 
  < ENV : TimedComponent | tstate : < envdtimer : DelayTimer | value : 50 >,
                           prov   : < ENV.persThereIn : Port | value : true > > .
\end{alltt}
The rule \verb|[env-true]| models the choice of \verb|ENV| to set the value of \verb|ENV.persThereIn| to \verb|true|; the rule \verb|[env-false]| is analogous.

For all instantaneous rewrite rules (except those introduced in
Sect.~\ref{sec:components}) we require that they are triggered by the
expiration of a timer which is indeed the case for all the rules in
our example. The advantage of this schema is that our specifications
are \emph{time-robust} \cite{DBLP:journals/entcs/OlveczkyM07a} for
which analysis techniques with the maximal time sampling strategy is
complete, i.e.~if there is a counterexample of a property to be
analyzed we will actually find it with the analysis technique.

\section{Analyzing in Real-Time Maude}\label{sec:analysis}

Real-Time Maude provides a variety of analysis techniques including
simulation through timed rewriting, untimed temporal logic model
checking, or (unbounded or time-bounded) search for reachability
analysis. However, for real-time specifications, timed properties
expressed in timed temporal logic are, of course, of great relevance,
e.g.~for a flight control system, changes in sensor information must
not only be reported eventually, but within a specific time bound. Up
to know, Real-Time Maude has lacked the ability to model check any
timed temporal logic formulas. In \cite{rtrts-Olveczky}, Lepri et al.\
show how to model check specific classes of timed temporal logic
formulas, expressed in metric temporal logic. In the same line as
\cite{rtrts-Olveczky}, we describe how to model check (different)
classes of metric temporal logic which will be shown useful for
analyzing our real-time specification for our case study.

Metric Temporal Logic (MTL) \cite{DBLP:journals/rts/Koymans90}
extends Linear Temporal Logic (LTL) \cite{kroegerTL} by
allowing to describe \emph{timed} properties of paths of a given
system which is useful for to specify time-critical systems. MTL is
more expressive than LTL, for instance, we can state the timed
property that some action should happen \emph{within some time
  bounds}, or that some property should always be satisfied
\emph{within an interval}. Formally, the syntax of MTL formulas is the
same as the syntax of LTL formulas, except for the until-operator
where a time interval is added. The formula $p \until{[b_1,b_2]} q$
states that $p \until{} q$ holds, i.e.~$p$ holds until $q$ holds, and
furthermore, $q$ occurs within the time interval $[b_1,b_2]$. Thus, in
MTL, time intervals are added to all derived operators like
$\always_{[b_1,b_2]}$ or $\eventually_{[b_1,b_2]}$.

MTL formulas $\phi$ are inductively defined as follows:
\[ \phi ::= true \mid{} p \mid{} \neg \phi \mid{} \phi_1 \land \phi_2
\mid{} \phi_1 \until{[b_1,b_2]} \phi_2\] where $p$ is a proposition
and for time intervals $[b_1,b_2]$ (and a given time domain $\mathbb
T$) we allow either $b_1,b_2 \in \mathbb T$, $b_1\leq b_2$ and $b_2 >
0$, or $t_1 \in \mathbb T$ and $t_2 = \infty$. Disjunction $\lor$ and
implication $\to$ are defined as usual.  $\eventually_{[b_1,b_2]}
\phi$ stands for $true \until{[b_1,b_2]} \phi$, and
$\always_{[b_1,b_2]}$ abbreviates $\neg (true \until{[b_1,b_2]} (\neg
\phi))$.  We will write $\until{\leq b}$ (and $\eventually_{\leq b}$,
$\always_{\leq b}$) if the lower bound is $0$.

We follow \cite{rtrts-Olveczky} in the notational conventions for
real-time rewrite theories: The set of states of a real-time rewrite
theory $\mathcal R = (\Sigma, E, IR, TR)$ is defined as the set of all
terms (modulo the equations in $E$) of type \verb|GlobalSystem|. A set
$\Pi$ of atomic propositions can be defined equationally (in a
protecting extension of $(\Sigma,E)$), and a labeling function $L_\Pi$
assigns to every state a finite set of propositions in $\Pi$ (cf.\
\cite{DBLP:journals/lisp/OlveczkyM07}).

Satisfaction of MTL formulas over timed paths of real-time rewrite
theories is defined as follows:

\begin{definition}[\cite{rtrts-Olveczky}]
  Let $\mathcal R$ be a real-time rewrite theory, $L_\Pi$ a labeling
  function on $\mathcal R$, and let $\pi = t_0 \xrightarrow{r_0} t_1
  \xrightarrow{r_1} \ldots$ be a timed path in $\mathcal R$. The
  satisfaction relation of an MTL formula $\phi$ for the path $\pi$ in
  $\mathcal R$ is then defined recursively as follows:
\begin{alignat*}{2}
  &\mathcal R,L_\Pi,\pi \vDash true&   \qquad &\mbox{always holds}\\
  &\mathcal R,L_\Pi,\pi \vDash p&       &\mbox{iff }p \in L_\Pi(t_0)\\
  &\mathcal R,L_\Pi,\pi \vDash \neg \phi& &\mbox{iff }\mathcal R,L_\Pi,\pi \not\vDash \phi\\
  &\mathcal R,L_\Pi,\pi \vDash \phi_1 \land \phi_2&       &\mbox{iff } \mathcal R,L_\Pi,\pi \vDash \phi_1 \mbox{ and } \mathcal R,L_\Pi,\pi \vDash \phi_2\\
  &\mathcal R,L_\Pi,\pi \vDash \phi_1 \until{[b_1,b_2]} \phi_2&
  &\mbox{iff there exists }j \in \mathbb N \mbox{ such that } \mathcal
  R,L_\Pi,\pi^j \vDash \phi_2\\ &&&\textstyle \mbox{ and }\mathcal
  R,L_\Pi,\pi \vDash \phi_1 \mbox{ for all }0 \leq i < j\mbox{, and }
  b_1 \leq \sum_{k=0}^{j-1}r_k \leq b_2
\end{alignat*}
For a state $t_0$ of sort \verb|GlobalSystem|, the satisfaction
relation of an MTL formula $\phi$ for the state $t_0$ in $\mathcal R$
is defined as follows:
\[\mathcal R, L_\Pi,t_0 \vDash \phi \Longleftrightarrow \forall \pi
\in \Paths(\mathcal R)_{t_0} \ .\ \mathcal R,L_\Pi,\pi \vDash \phi\]
\end{definition}

Real-Time Maude does currently not provide an MTL model
checker. However, in previous works, cf.,
e.g.~\cite{DBLP:conf/snpd/Olveczky08}, some simple MTL formulas could
already be model checked using the time-bounded search command or the LTL
model checker of Maude. In
\cite{rtrts-Olveczky}, Lepri et al.\ present an automatized analysis algorithm of
two important classes of MTL formulas, namely the bounded response
property $\always(p \to (\eventually_{\leq b} q))$ and the minimum
separation property $\always(p \to (p\weakuntil (\always_{\leq b} \neg
p)))$. We extend their ideas and present algorithms for two further
and more general classes of MTL formulas:
\begin{enumerate}
\item Generalized time-bounded response: $\always(\bigvee_{i\in I}(\eventually_{\leq b_i} q_i))$ for $I = \{1,2,\ldots,n\}\subset \mathbb N$ a finite set of indices, and
\item Time-Bounded safety: $\always(p \lor \always_{\leq b} q)$
\end{enumerate}
where $q_i$, $q$, and $p$ are all atomic propositions of $\Pi$.

In the following sections \ref{sec:eventually:algorithm} and
\ref{sec:always:algorithm}, we will describe the algorithm of the
transformation of the two classes of MTL formulas to corresponding LTL
formulas which are then model checked over the transformed rewrite
theory $\tilde{\mathcal R}$. So in each case, for each formula $\phi$
belonging to one of the classes, we will show that $\mathcal R, L_\Pi,
\pi \vDash \phi$ if and only if $\tilde{\mathcal R}, \tilde{L_\Pi},
\tilde{\pi} \vDash \tilde{\phi}$, hence it is shown how to modify the
rewrite theory $\mathcal R$ such that we can use the LTL model checker
of Maude to verify $\phi$.

\subsection{Model Checking MTL Formulas of the Form
  $\always(\bigvee_{i\in I}(\eventually_{\leq b_i}
  q_i))$}\label{sec:eventually:algorithm}


For model checking MTL formulas of the form $\always(\bigvee_{i\in
  I}(\eventually_{\leq b_i} q_i))$ for a finite set $I$ of indices, we
add a single clock $c$ to the system state which will count the
elapsed time after each state in which no $q_i$ is satisfied. It is
indeed possible to restrict oneself to using one single clock by
leveraging the observation that given a sequence of states $t_1,
\ldots, t_n$ satisfying no $q_i$ within the time interval $\max\{b_i
\mid i \in I\}$, the first state $s_1$ determines the deadlines
$b_i$, and hence we can set the single clock $c$ to zero and start it
in state $s_1$: the first occurrence of a $q_i$-satisfying state $s_w$
will also witness the validity of $\always(\bigvee_{i\in
  I}(\eventually_{\leq b_i} q_i))$ in all $s_i$ states between $s_1$
and $s_w$ (i.e.~state $s_i$ satisfies the formula for all $1 \leq i
\leq w$). The clock is switched off in state $s_w$, and can be
switched back on if another state satisfying none of the $q_i$'s is
found after the witness state $s_w$, and restart the counting
process. In summary, to model check the MTL formula
$\phi\equiv\always(\bigvee_{i\in I}(\eventually_{\leq b_i} q_i))$ for
a path $\pi$ in a real-time rewrite theory $\mathcal R$, the following
steps are necessary: First, to $\mathcal R$ a class \verb|Clock|,
modeling the clock, and corresponding equations are added; second,
$\phi$ is translated to $\tilde{\phi}\equiv \always(\bigvee_{i\in
  I}(\eventually(q_i \land clock \leq b_i)))$ where $clock$ is an
atomic proposition which refers to the current time value of the
clock; the rewrite rules are transformed to adequately take into
account the propositions and the clock behavior. The transformation algorithm,
called \emph{$\eventually$-transformation} in the following, comprises four steps:
\begin{enumerate}
\item A class modeling the clock is added:
\small
\begin{verbatim}
  sort ClockStatus .
  ops on off : -> ClockStatus [ctor] .
  class Clock | clock : Time, status : ClockStatus .
\end{verbatim}
\normalsize
\item The initial state $\{t_0\}$ is modified by adding a clock object
  such that the new initial state is
\begin{alltt}\small
  \{\(t\sb{0}\) <\(\ c\) : \!\!Clock | clock :\!\! 0, status :\(\ x\) >\}
\end{alltt}
where $c$ is a constant of sort \verb|Oid| and $x$ is \verb|off| if $\{t_0\} \vDash \bigvee_{i\in I} q_i$,
  else \verb|on|.
\item The functions \verb|delta| (modeling the effect of time elapse on a configuration) and \verb|mte| (computing the maximum time elapse for a configuration) are extended for the newly introduced clock object as follows:
\begin{alltt}\small
  eq delta(< \(c\) : Clock | status : on, clock : T >, T') =
           < \(c\) : Clock | clock : if T <= \(b\sb{max}\) then min(T+T',\(b\sb{max}\)+1) else T fi > .
  eq delta(< \(c\) : Clock | status : off >, T') = < \(c\) : Clock | > .
  eq mte(< \(c\) : Clock | >) = INF .
\end{alltt}
where $b_{max} = \max\{b_i \mid i \in I\}$.
\item Instantaneous rewrite rules are modified such that the clock is
  switched on and off depending on the target state. Each
  instantaneous rule \texttt{$t$ => $t'$ if $cond$} or \texttt{\{$t$\}
    => \{$t'$\} if $cond$} in $\mathcal R$ is replaced by the
  following four rules (where \verb|REST| is a new variable of type
  \verb|Configuration|):
\begin{enumerate}[Rule (1):]
\item If the clock is off then the clock stays off if at least one of the $q_i$'s is satisfied.
\begin{alltt}\small
  \{\(t\) REST < \(c\) : Clock | status : off >\}
    => \{\(t'\) REST < \(c\) : Clock | >\}
  if (modelCheck(\{\(t'\) REST\},\(q\sb{1}\)) == true  or
       \(...\)
      modelCheck(\{\(t'\) REST\},\(q\sb{n}\)) == true)  and \(cond\) .
\end{alltt}
\item If the clock is off then the clock is switched on if none of the $q_i$'s is satisfied.
\begin{alltt}\small
  \{\(t\) REST < \(c\) : Clock | status : off >\}
    => \{\(t'\) REST < \(c\) : Clock | clock : 0, status : on >\}
  if (not (modelCheck(\{\(t'\) REST\},\(q\sb{1}\)) == true  or
            \(...\)
           modelCheck(\{\(t'\) REST\},\(q\sb{n}\)) == true))  and \(cond\) .
\end{alltt}
\item If the clock is on then the clock stays on if for all $i \in I$, either $q_i$ is not satisfied or the time bound is already exceeded.
\begin{alltt}\small
  \{\(t\) REST < \(c\) : Clock | clock : T, status : on >\}
    => \{\(t'\) REST < \(c\) : Clock | >\}
  if (not ((modelCheck(\{\(t'\) REST\},\(q\sb{1}\)) == true and (T <= \(b\sb{1}\)))  or
           \(\ldots\)  or
           (modelCheck(\{\(t'\) REST\},\(q\sb{n}\)) == true and (T <= \(b\sb{n}\)))))  and \(cond\) .
\end{alltt}
\item If the clock is on then the clock is switched off if at least one of the $q_i$'s is satisfied within its corresponding time bound $b_i$.
\begin{alltt}\small
  \{\(t\) REST < \(c\) : Clock | clock : T, status : on >\}
    => \{\(t'\) REST < \(c\) : Clock | clock : 0, status : off >\}
  if (modelCheck(\{\(t'\) REST\},\(q\sb{1}\)) == true and (T <= \(b\sb{1}\)))  or
     \(\ldots\)  or
     (modelCheck(\{\(t'\) REST\},\(q\sb{n}\)) == true and (T <= \(b\sb{n}\)))  and \(cond\) .
\end{alltt}
\end{enumerate}
\end{enumerate}
Thus, by the above steps 1. to 4. we obtain a real-time rewrite theory
$\tilde{\mathcal R}$, a labeling function $\tilde{L}_\Pi$ which is
adapted to the transformed state space while the labeling remains
unchanged (i.e.~$L_\Pi(\{t\}) = \tilde{L}_\Pi(\{\tilde{t}\ o_{clock}\})$ where
$o_{clock}$ is the added clock), and $\{\tilde{t}_0\}$ is the transformed
initial state.

Finally, for model checking the MTL formula we need to add an atomic
proposition stating that the current clock value is less or equal than
a given time value $r$.
\begin{alltt}\small
op clockLeq : Time -> Prop [ctor] .
eq \{< \(c\) : Clock | clock : t, status : s > REST\} |= clockLeq(b) = (t <= b) .
\end{alltt}
The MTL formula $\always(\bigvee_{i\in I}(\eventually_{\leq b_i}
  q_i))$ can then be model checked using Real-Time Maude's untimed LTL model checking features, i.e.~we check whether the transformed formula holds by invoking\\[0.1cm]
\small
\verb|(mc |$\{\tilde{t}_{0}\}$ \verb+ |=u+\\
\verb|    [] ( (<>(| $q_{1}$ \verb|/\ clockLeq(| $b_{1}$ \verb|)))|\\
\verb|          \/ | $\ldots$ \verb| \/|\\
\verb|         (<>(| $q_{n}$ \verb|/\ clockLeq(| $b_{n}$ \verb|))) )  .)|\\
\normalsize
which precisely is $\tilde{\mathcal R},\tilde{L}_\Pi,\{\tilde{t}_0\} \vDash \always(\bigvee_{i\in
  I}(\eventually(q_i \land clock \leq b_i)))$.
\paragraph{Proof of Correctness of the Transformation.}
\begin{lemma}[cf. \cite{rtrts-Olveczky}]\label{lem:bisimilar}
Let $\mathcal R$ be a real-time rewrite theory, $L_\Pi$ with $q_i \in \Pi$ for all $i \in I$ a labeling function for $\mathcal R$, and let $\{t_0\}$ be an initial state for $\mathcal R$. Let $\tilde{\mathcal R}$, $\tilde{L_\Pi}$, and $\{\tilde{t_0}\}$ be the result of the $\eventually$-transformation applied to $\mathcal R$, $L_\Pi$, and $t_0$.
Then for each path $\{t_0\} \xrightarrow{r_0} \{t_1\} \xrightarrow{r_1} \ldots$ in $\mathcal R$ there is a path $\{\tilde{t_0}\} \xrightarrow{r_0} \{\tilde{t_1}\} \xrightarrow{r_1} \ldots$ in $\tilde{\mathcal R}$ such that, for all $i \geq 0$, there exists $t'_i$ with $\tilde{t_i} = t_i t'_i$, and vice versa.
\end{lemma}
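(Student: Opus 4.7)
The idea is to construct a step-by-step simulation in both directions, exploiting the fact that the $\eventually$-transformation affects states only by adjoining a single clock object and affects rules only by (a) leaving tick rules essentially unchanged and (b) replacing each instantaneous rule with a case split on clock status and whether some $q_i$ holds in the target. I would prove this by induction on the length of the (finite prefix of the) path and then lift the construction to infinite paths in the obvious way; the base case is immediate since $\tilde{t}_0 = t_0\,o_{\mathit{clock}}$ where $o_{\mathit{clock}}$ is the initial clock object constructed in step 2 of the transformation. The bookkeeping term $t'_i$ at position $i$ is then defined to be exactly the clock object present in $\tilde{t}_i$.

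For the forward direction, suppose by induction that $\tilde{t}_i = t_i\,o^{(i)}_{\mathit{clock}}$ and consider the transition $\{t_i\} \xrightarrow{r_i} \{t_{i+1}\}$ in $\mathcal R$. I distinguish two cases. If it is a tick rule of duration $r_i$ applied via $\mathtt{delta}$ and guarded by $\mathtt{mte}$, then since step 3 of the transformation defines $\mathtt{delta}$ on clock objects (so the extended $\mathtt{delta}$ distributes correctly) and sets $\mathtt{mte}$ of the clock to $\infty$, the same tick rule fires on $\{\tilde{t}_i\}$ with the same duration $r_i$, yielding $\{t_{i+1}\,o^{(i+1)}_{\mathit{clock}}\}$ where $o^{(i+1)}_{\mathit{clock}}$ is the delta-image of $o^{(i)}_{\mathit{clock}}$. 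If it is an instantaneous rule $t \Rightarrow t'$ if $\mathit{cond}$, then exactly one of the four transformed rules (1)--(4) applies: the status of $o^{(i)}_{\mathit{clock}}$ determines whether we land in \{(1),(2)\} or \{(3),(4)\}, and whether some $q_k$ holds in $\{t_{i+1}\,\mathtt{REST}\}$ (together with the $T\leq b_k$ guard when the clock is on) selects one rule within that pair. The four conditions are pairwise exclusive and jointly exhaustive, their conjunction with $\mathit{cond}$ reduces to $\mathit{cond}$, and the only difference from the original rewrite is the update of the clock attributes, so the resulting configuration has the required shape.

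For the backward direction, every transition of $\tilde{\mathcal R}$ either is a tick (which, erasing the clock object, is a tick of $\mathcal R$ of the same duration, since $\mathtt{delta}$ and $\mathtt{mte}$ restricted to non-clock objects are unchanged) or instantiates one of rules (1)--(4), each of which was obtained from an instantaneous rule $t \Rightarrow t'$ if $\mathit{cond}$ of $\mathcal R$ whose original condition $\mathit{cond}$ is implied by the enlarged condition; erasing the clock part of source and target yields a legitimate one-step rewrite in $\mathcal R$. Hence projection is the desired inverse simulation. Finally, the definition of path permits terminal stuttering with $r_j = 0$; for this it suffices to observe that $\{t\,o_{\mathit{clock}}\}$ has an outgoing non-stuttering transition in $\tilde{\mathcal R}$ if and only if $\{t\}$ does in $\mathcal R$, by the same case analysis.

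The main obstacle I anticipate is the exhaustiveness and disjointness of rules (1)--(4): one must check carefully that the four guards, viewed as Boolean combinations of $\mathtt{modelCheck}(\{t'\,\mathtt{REST}\},q_k)$ and of the clock bounds $T \leq b_k$, partition the space of possible (status, target) pairs, and that no additional rule-applicability side-condition is introduced by the presence of the free variable $\mathtt{REST}$. A secondary subtlety is ensuring that if several instantaneous rules of $\mathcal R$ are simultaneously applicable, the corresponding transformed rules are still simultaneously applicable in $\tilde{\mathcal R}$ with matching clock behavior; this follows because the clock update depends only on the target state and the current status, not on which original rule produced the step.
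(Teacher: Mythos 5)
Your proposal is correct and follows essentially the same route as the paper's own (much briefer) proof: the paper likewise argues that the clock object cannot affect the timed behavior because its \texttt{mte} is \texttt{INF} and \texttt{delta} acts componentwise, and that for every extended state where an original instantaneous rule applies exactly one of the transformed rules (1)--(4) applies and treats the original part of the state identically, giving the two-way correspondence of paths. Your elaboration of the induction, the exhaustiveness/disjointness of the four guards, and the terminal-stuttering case simply fills in details the paper leaves implicit.
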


\begin{proof}
We have to show that the transformation does not modify the original timed behavior. This is ensured by the following facts:
\begin{itemize}
\item Adding the clock class and a clock object to the initial state does not affact the original part of the state, and moreover, the timed behavior of the original system is not affected by the newly introduced clock since \verb|mte| of the clock evaluates to \verb|INF|.
\item The transformation replaces each rewrite rules by a number of rules with additional conditions. However, for each (extended) state to which the original rule is applicable, there is exactly one new rule applicable, and furthermore, the new rules treat the original state part as the original rule.
\end{itemize}
It follows that the original timed behavior is not modified, in particular, no original paths are blocked by the new rules, and conversely, new rules yield the same result for the original part of the state.
\end{proof}

\begin{theorem}
Let $\mathcal R$ be a real-time rewrite theory, $L_\Pi$ a labeling function for $\mathcal R$ with $q_i \in \Pi$ for all $i \in I$, and $\{t_0\}$ an initial state of $\mathcal R$. Let $\tilde{\mathcal R}$, $\tilde{L_\Pi}$, and $\{\tilde{t_0}\}$ be the result of the $\eventually$-transformation applied to $\mathcal R$, $L_\Pi$, and $\{t_0\}$. Then the following equivalence holds:
\[ \mathcal R, L_\Pi, \{t_0\} \vDash \always\bigvee_{i\in I} \eventually_{\leq b_i} q_i
\quad \Longleftrightarrow \quad 
\tilde{\mathcal R}, \tilde{L_\Pi}, \{\tilde{t_0}\} \vDash \always \bigvee_{i\in I} \eventually(q_i \land clock \leq b_i)))
\]

\end{theorem}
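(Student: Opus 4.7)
The plan is to leverage Lemma~\ref{lem:bisimilar} to reduce the theorem to a path-level equivalence, and then use a precise invariant describing the clock's value along any transformed path to match MTL witnesses on both sides. By Lemma~\ref{lem:bisimilar}, paths of $\mathcal{R}$ from $\{t_0\}$ and paths of $\tilde{\mathcal{R}}$ from $\{\tilde{t}_0\}$ correspond bijectively while preserving timestamps $r_l$ and the restriction of the labeling to $\Pi$. Since both formulas are $\always$-properties, it suffices to fix a matched pair $(\pi, \tilde{\pi})$ and prove, for every $k \geq 0$,
\[ \pi^k \vDash \bigvee_{i \in I} \eventually_{\leq b_i} q_i \iff \tilde{\pi}^k \vDash \bigvee_{i \in I} \eventually(q_i \wedge \mathit{clock} \leq b_i). \]

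The backbone of the argument is an inductive \emph{clock invariant} on $\tilde{\pi}$: either (a) the clock is \texttt{off} with value $0$, in which case the most recent instantaneous transition into $\tilde{t}_k$ was a rule~(1) or (4) step, so some $q_i$ holds at $t_k$ (exception: $k=0$ with no $q_i$ at $t_0$); or (b) the clock is \texttt{on} with value $T = \min(\sum_{l=k'}^{k-1} r_l,\, b_{max}+1)$, where $k'$ is the unique latest index at which rule~(2) fired, and throughout $[k',k]$ no rule~(4) step (a ``successful witness'') has occurred. This follows by a routine case analysis on rules (1)--(4) together with the extended $\mathtt{delta}$, using that instantaneous rules contribute $r_l = 0$ so only ticks accumulate clock time.

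For $(\Rightarrow)$, fix $k$ and let $(i^*, j^*)$ be a witness for $\psi$ at $\pi^k$ with $j^*$ minimal. If the clock is \texttt{off} at $\tilde{t}_{j^*}$ its value is $0 \leq b_{i^*}$. If \texttt{on} with startup $k' \geq k$, then $T = \sum_{l=k'}^{j^*-1} r_l \leq \sum_{l=k}^{j^*-1} r_l \leq b_{i^*}$. If $k' < k$, apply $\phi$ at the earlier position $\pi^{k'}$: the invariant forbids any successful witness in $[k', j^*)$, so the earliest valid MTL witness from $k'$ must land at $j^*$ itself, yielding some $i_0$ with $q_{i_0}$ at $t_{j^*}$ and $\sum_{l=k'}^{j^*-1} r_l \leq b_{i_0}$; then $(i_0, j^*)$ witnesses $\tilde{\psi}$ at $\tilde{\pi}^k$ since its clock value is $\leq b_{i_0}$ and $\sum_{l=k}^{j^*-1}r_l \leq b_{i_0}$ holds a fortiori. $(\Leftarrow)$ is dual: from a witness $(i, j)$ for $\tilde{\psi}$ at $\tilde{\pi}^k$, if \texttt{off} the invariant locates a successful-witness transition at or after $k$ producing an MTL witness; if \texttt{on} with startup $k' \leq k$, then $\sum_{l=k}^{j-1} r_l \leq T \leq b_i$; and if $k' > k$ the clock was \texttt{off} at $\tilde{t}_{k'-1}$, forcing some $q_{i'}$ to hold at $t_{k'-1}$, which furnishes an earlier MTL witness from $\pi^k$ within bound $b_{i'}$.

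The main obstacle is the case analysis for $(\Rightarrow)$ when $k' < k$: one must exploit that $\phi$ holds at \emph{all} earlier positions $\pi^{k'}$ (not just at $\pi^k$) to rule out hidden successful witnesses in the on-interval $[k', j^*)$, and to reindex the minimality argument correctly. A secondary technicality is the clock cap at $b_{max}+1$: since every comparison of interest involves some $b_i \leq b_{max}$, the cap never affects truth values, but this must be verified when appealing to the invariant in clause~(b).
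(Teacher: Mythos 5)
Your overall strategy---transfer witnesses between $\pi$ and $\tilde{\pi}$ via Lemma~\ref{lem:bisimilar} together with an invariant tying the clock value to the elapsed time since the last rule-(2) switch-on---is workable and close in spirit to the paper's proof, but two load-bearing steps are wrong as stated. First, the opening reduction is false: the per-position biconditional $\pi^k \vDash \bigvee_i\eventually_{\leq b_i}q_i \Longleftrightarrow \tilde{\pi}^k \vDash \bigvee_i\eventually(q_i \land clock \leq b_i)$ does not hold. For example, start in a state satisfying some $q_i$ (clock off), take an instantaneous step into a state satisfying no $q_i$ (rule (2) starts the clock), let more than $b_{max}$ time elapse, then step into a state where some $q_i$ holds forever after: rule (3) applies, the clock stays on with value exceeding $b_{max}$ and can never be switched off again (rule (4) needs $clock \leq b_i$, rules (1)--(2) need status \texttt{off}), so at that position the MTL disjunct holds trivially while the transformed disjunct fails. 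Such positions occur only on paths where $\always\psi$ is already violated earlier, which is exactly why your $(\Rightarrow)$ argument must invoke the hypothesis at the earlier position $k'$; in effect you are proving the correct statements ``$\pi\vDash\always\psi$ implies $\tilde{\pi}^k\vDash\tilde{\psi}$ for all $k$'' and ``$\tilde{\pi}^k\vDash\tilde{\psi}$ implies $\pi^k\vDash\psi$,'' not the advertised equivalence, and the write-up should say so.

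Second, the pivotal claim in the case $k'<k$ --- that the earliest MTL witness from $k'$ ``must land at $j^*$ itself'' --- is false: it cannot land at $j^*$. If it were reached by an instantaneous step, rule (4) would fire and the clock would be \texttt{off} at $j^*$, contradicting your case assumption; if by a tick, an even earlier witness inside $(k',j^*)$ would exist (using that the $q_i$ are unchanged by ticks), which the absence of rule-(4) events there excludes. The earliest witness $m$ lies strictly beyond $j^*$, and the repair is to use $m$ itself: no rule (4) can fire in $(k',m)$, so the clock value at the step into $m$ equals the elapsed time since $k'$, rule (4) fires there, and $t_m$ (some $q_{i_0}$ true, clock reset, $m\geq k$) is the $\tilde{\psi}$-witness --- not $(i_0,j^*)$. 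The symmetric sub-case $k'>k$ of your $(\Leftarrow)$ direction has the same defect: $q_{i'}$ holding at $t_{k'-1}$ gives no control on $\sum_{l=k}^{k'-2}r_l$ versus $b_{i'}$; you must instead split on whether the clock is off at $k$ (then $t_k$ itself is a zero-delay witness, via your invariant) or on at $k$ (then the first rule-(4) switch-off after $k$ supplies the witness, its source clock value bounding the elapsed time from $k$). Note also that your invariant clause (a), and the tick sub-argument above, tacitly assume the $q_i$ are not altered by tick steps, since the most recent transition into a clock-off state may be a tick; the paper's proof shares this tacit assumption, but you should make it explicit. The paper avoids most of this case analysis by arguing contrapositively from the minimal violating suffix, where the clock is switched on and provably never switched off, so the clock value equals elapsed time from that point on.
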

\begin{proof}

  ``$\Longrightarrow$'': Assume $\tilde{\mathcal R}, \tilde{L_\Pi},
  \{\tilde{t_0}\} \not\vDash \always \bigvee_{i\in I} \eventually(q_i
  \land clock \leq b_i)))$, we show $\mathcal R, L_\Pi, \{t_0\}
  \not\vDash \always\bigvee_{i\in I} \eventually_{\leq b_i} q_i$.  Let
  $\tilde{\pi} = \{\tilde{t}_0\} \xrightarrow{r_0} \{\tilde{t}_1\}
  \xrightarrow{r_1} \ldots$ be a path in $\tilde{\mathcal R}$ which
  does not satisfy $\always \bigvee_{i\in I} \eventually(q_i \land
  clock \leq b_i)))$.
  By definition of $\vDash$ we know that there exists $j \geq 0$ such
  that $\tilde{\pi}^j \not\vDash \bigvee_{i\in I} \eventually(q_i
  \land clock \leq b_i))$, i.e.
  \begin{equation*}\tag{1}
    \forall i \in I.\forall k \geq j. (\tilde{\pi}^k \not\vDash q_i) \lor (\tilde{\pi}^k \not \vDash clock \leq b_i).
  \end{equation*}
  Let $j \geq 0$ be the smallest index satisfying (1), and therefore, if $j=0$ then the clock status is \texttt{off}, otherwise $j > 0$ and 
  $\tilde{\pi}^{j-1} \vDash \bigvee_{i\in I} \eventually(q_i \land
  clock \leq b_i))$. It follows that there exists $i \in I$ such that
  $\tilde{\pi}^{j-1} \vDash q_i \land clock \leq b_i$. Rule (4) ensures that as soon as this formula is satisfied, the clock status is \texttt{off}, hence the clock status in $\{\tilde{t}_{j-1}\}$ is \texttt{off}, too. It follows that the
  rewrite step from $\{\tilde{t}_{j-1}\}$ to $\{\tilde{t}_j\}$ is an
  instantaneous step of the form of rule (2) which sets the clock
  status to \verb|on| and the clock value to $0$. Furthermore, in both cases the
  clock can only be switched off by rule (4) which can never be
  applied because of the condition $\bigvee_{i\in I} q_i \land
  clock \leq b_i$ which is -- by assumption -- not satisfied. We
  can conclude that, from state $\{\tilde{t}_j\}$ on, the clock is
  continuously on and the clock value equals the elapsed time since
  $\{\tilde{t}_j\}$, i.e.~the clock value is the sum of the durations
  of the applied tick rules since $\{\tilde{t}_j\}$.\footnote{The
    clock value will not be greater than $\max_{i\in I} b_i + 1$.}
  Therefore, for all $i \in I$ and for all $k \geq j$, $\tilde{\pi}^k
  \vDash clock > b_i$ if and only if $\sum_{l=j}^{k-1} r_l >
  b_i$. From (1) it follows
  \begin{equation*}\tag{2}
    \forall i \in I.\forall k \geq j. (\tilde{\pi}^k \not\vDash q_i) \lor \left(\sum_{l=j}^{k-1} r_l > b_i\right).
  \end{equation*}
  Hence from (2) we can conclude that $\tilde{\pi}^k \not\vDash q_i$
  for all $k\geq j$ such that $\sum_{l=j}^{k-1} r_l \leq b_i$. This
  implies $\tilde{\pi}^j \not\vDash \bigvee_{i\in I}\eventually_{\leq
    b_i}q_i$, and then $\tilde{\pi} \not\vDash \always \bigvee_{i\in
    I}\eventually_{\leq b_i}q_i$. By Lemma~\ref{lem:bisimilar}, there
  exists a unique path $\pi$ with initial state $\{t_0\}$ for which
  $\pi \not\vDash \always\bigvee_{i\in I}\eventually_{\leq b_i}
  q_i$. Finally, it follows $\mathcal R, L_\Pi, \{t_0\} \not\vDash
  \always\bigvee_{i\in I} \eventually_{\leq b_i} q_i$ which was to be
  shown.

  ``$\Longleftarrow$'': Assume $\mathcal R, L_\Pi, \{t_0\} \not\vDash
  \always\bigvee_{i\in I} \eventually_{\leq b_i} q_i$, we show
  $\tilde{\mathcal R}, \tilde{L_\Pi}, \{\tilde{t_0}\} \not\vDash
  \always \bigvee_{i\in I} \eventually(q_i \land clock \leq
  b_i)$. Let $\pi = \{t_0\} \xrightarrow{r_0} \{t_1\}
  \xrightarrow{r_1} \ldots$ be a path in $\mathcal R$, by
  Lemma~\ref{lem:bisimilar}, we also have a path $\tilde{\pi} =
  \{\tilde{t}_0\} \xrightarrow{r_0} \{\tilde{t}_1\} \xrightarrow{r_1}
  \ldots$ in $\tilde{\mathcal R}$. By assumption, $\pi$ and hence also
  $\tilde{\pi}$ do not satisfy $\always\bigvee_{i\in I}
  \eventually_{\leq b_i} q_i$, i.e. there exists $j \geq 0$ such that
\begin{equation*}
  \forall i \in I.\forall k \geq j.(\tilde{\pi}^k \not\vDash q_i) \lor \left(
    \sum_{l=j}^{k-1}r_l > b_i\right).\tag{3}
\end{equation*}
Let $j \geq 0$ be the minimal index satisfying (3). We show that in $\{\tilde{t}_j\}$ the clock status is \verb|on|. If $j = 0$ then the clock status is \verb|on| by definition of the initial state. Now assume $j > 0$. Then in state
$\{\tilde{t}_{j-1}\}$ it must hold $\tilde{\pi}^{j-1} \vDash q_i$
for some $i \in I$.  So the clock status in $\{\tilde{t}_{j-1}\}$ is
\verb|off| and the clock value is $0$ because otherwise, if the clock status was \verb|on|, there would exist a state before $j$ not satisfying (3) and hence contradicting our assumption. From (3) it follows that the rewrite
step from $j-1$ to $j$ is an instantaneous rewrite step, switching the
clock on (with clock value $0$). Since the clock cannot be switched
off (the conditions of rule (4) are never met from $\tilde{\pi}^j$
on), the durations of the tick steps since $\tilde{t}_j$ and the clock
value are equal. It follows that
\begin{equation*}
  \forall i \in I.\forall k \geq j.(\tilde{\pi}^k \not\vDash q_i) \lor \left(\tilde{\pi}^k \not\vDash
    clock \leq b_i\right)
\end{equation*}
which implies $\tilde{\mathcal R}, \tilde{L}_\Pi, \{\tilde{t}_0\}
\not\vDash \always \bigvee_{i\in I} \eventually(q_i \land clock
\leq b_i)$ which was to be shown.
\end{proof}

\subsection{Model Checking MTL Formulas of the Form $\always(p \lor
  \always_{\leq b} q)$}\label{sec:always:algorithm}


For model checking MTL formulas of the form $\always(p \lor
\always_{\leq b} q)$, we add a single clock which counts the minimum
time that $q$ needs to be true once $p$ became false. Here, we use the
observation that if $p$ was false at $t_1$ and becomes false again
between $t_1$ and $t_1 + b$, say at $t_2$, $q$ must additionally hold
until $t_2 + b$. Hence, it is valid to reset the clock at $t_2$ and
thereby enforce that $q$ must hold true for $b$ more time units.
So to model check the MTL formula
$\phi\equiv\always(p \lor \always_{\leq b} q)$ for
a path $\pi$ in a real-time rewrite theory $\mathcal R$ with labeling
function $L_\Pi$, the following steps are necessary: First, to
$\mathcal R$ a class \verb|Clock|, modeling the clock, and
corresponding equations are added; second, $\phi$ is translated to
$\tilde{\phi}\equiv \always(p \lor (q \weakuntil (clock > b)))$
where $clock$ is an atomic proposition which refers to the current
time value of the clock; the rewrite rules are transformed to
adequately take into account the propositions and the clock
behavior. The transformation, which we will call \emph{$\always$-transformation}
in the following, proceeds as follows.
\begin{enumerate}
\item A class modeling the clock is added
 (analogous to the $\eventually$-transformation):
\small
\begin{verbatim}
sort ClockStatus .
ops on off : -> ClockStatus [ctor] .
class Clock | clock : Time, status : ClockStatus .
\end{verbatim}
\normalsize
\item The initial state $\{t_0\}$ is modified by adding a clock object
  such that the new initial state is \[\small\texttt{\{$t_0$\ <$\ c$\ :\ Clock\
    |\ clock\ :\ 0,\ status\ :\ off\ >\}}\]
\item The functions \verb|delta| and \verb|mte| are extended for the
  newly introduced class \verb|Clock| as follows (again analogous to the
$\eventually$-transformation):
\begin{alltt}\small
eq delta(< \(c\) : Clock | status : on, clock : T >, T') =
         < \(c\) : Clock | clock : if T <= \(b\) then min(T + T',\(b\)+1) else T fi > .
eq delta(< \(c\) : Clock | status : off >, T') = < \(c\) : Clock | > .
eq mte(< \(c\) : Clock | >) = INF .
\end{alltt}
\item  Instantaneous rewrite rules are modified such that the clock is
  switched on and off depending on the target state. Each
  instantaneous rule \texttt{$t$ => $t'$ if $cond$} or \texttt{\{$t$\}
    => \{$t'$\} if $cond$} in $\mathcal R$ is replaced by the
  following four rules (where \verb|REST| is a new variable of type
  \verb|Configuration|):
\begin{enumerate}[Rule (1):]
\item If in the next state the formula $\neg p \lor \neg q$ is satisfied, or in
the previous state the formula $p \lor \neg q$ is satisfied, then the clock stays or is switched off.
\begin{alltt}\small
  \{\(t\) REST < \(c\) : Clock | >\}
    => \{\(t'\) REST < \(c\) : Clock | clock : 0, status : off >\}
  if (modelCheck(\{\(t'\) REST\}, ~ \(p \verb+\/+\) ~ \(q\)) == true  or
      modelCheck(\{\(t\) REST\},\(p \verb+\/+\) ~ \(q\)) == true)  and \(cond\) .
\end{alltt}
\item If the clock is off, it only gets switched on if in the previous state
$\neg p \land q$ was satisfied and in the next state $p \land q$ is satisfied.
So the clock begins to count if there was a state where $p$ was not true (so we need to
look for an interval of length $\geq r$ where $q$ always holds) and in the next state
$p$ is true (so the formula $\always(p \lor \always_{\leq b} q)$ is satisfied).
\begin{alltt}\small
  \{\(t\) REST < \(c\) : Clock | status : off >\}
    => \{\(t'\) REST < \(c\) : Clock | clock : 0, status : on >\}
  if (not (modelCheck(\{\(t'\) REST\}, ~ \(p \verb+\/+\) ~ \(q\)) == true  or
           modelCheck(\{\(t\) REST\},\(p \verb+\/+\) ~ \(q\)) == true))  and \(cond\) .
\end{alltt}
\item If the clock is on and in the next state the formula $p \land q$ is satisfied
then the clock stays on. The clock is only on if we are looking for an interval of length
$\geq b$ such that $q$ is satisfied, so we can safely go on with counting the advanced time
since we do not ``miss'' any counterexample since $p$ is satisfied in the next state.
\begin{alltt}\small
  \{\(t\) REST < \(c\) : Clock | status : on >\}
    => \{\(t'\) REST < \(c\) : Clock | >\}
  if modelCheck(\{\(t'\) REST\},\(p \verb+/\+ q\)) == true  and \(cond\) .
\end{alltt}
\end{enumerate}
\end{enumerate}

Thus, by the above steps 1.\ to 4.\ we obtain a real-time rewrite theory
$\tilde{\mathcal R}$, a labeling function $\tilde{L}_\Pi$ which is
adapted to the transformed state space while the labeling remains
unchanged (i.e.~$L_\Pi(\{t\}) = \tilde{L}_\Pi(\{\tilde{t}\ o_{clock}\})$ where
$o_{clock}$ is the added clock), and $\{\tilde{t}_0\}$ is the transformed
initial state.

Finally, for model checking the MTL formula we need to add an atomic
proposition stating that the current clock value is less or equal than
a given time value $b$.
\begin{alltt}\small
op clockLeq : Time -> Prop [ctor] .
eq \{< \(c\) : Clock | clock : t, status : s > REST\} |= clockLeq(\(b\)) = (t <= \(b\)) .
\end{alltt}
The MTL formula $\always(p \lor \always_{\leq b} q)$ can then be model checked using Real-Time Maude's untimed LTL model checking features, i.e.~we check whether the transformed formula holds by invoking
\begin{alltt}\small
(mc \{\(\tilde{t}\sb{0}\)\} |=u [] (\(p \verb+\/+ q\) W (not clockLeq(\(b\))))) .)
\end{alltt}

\paragraph{Proof of Correctness of the Transformation.}

\begin{lemma}[cf.\ \cite{rtrts-Olveczky}]\label{lem:always:paths}
Let $\mathcal R$ be a real-time rewrite theory, $L_\Pi$ with $p,q \in \Pi$ a labeling function for $\mathcal R$, and let $\{t_0\}$ be an initial state for $\mathcal R$. Let $\tilde{\mathcal R}$, $\tilde{L}_\Pi$, and $\{\tilde{t}_0\}$ be the result of the $\always$-transformation applied to $\mathcal R$, $L_\Pi$, and $t_0$.
Then for each path $\{t_0\} \xrightarrow{r_0} \{t_1\} \xrightarrow{r_1} \ldots$ in $\mathcal R$ there is a path $\{\tilde{t}_0\} \xrightarrow{r_0} \{\tilde{t}_1\} \xrightarrow{r_1} \ldots$ in $\tilde{\mathcal R}$ such that, for all $i \geq 0$, there exists $t'_i$ with $\tilde{t_i} = t_i t'_i$, and vice versa.
\end{lemma}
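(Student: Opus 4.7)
The plan is to mirror the strategy of Lemma~\ref{lem:bisimilar}, since the two lemmas have identical shape and the $\always$-transformation shares the essential structural features of the $\eventually$-transformation. I would establish two facts and then build the path correspondence by induction.

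First, I would argue that the added clock leaves the original timed behavior untouched. The added class \verb|Clock| and the clock object live as a separate object in the configuration multiset; by the definition of \verb|delta| and \verb|mte| on the clock, tick rules of $\tilde{\mathcal R}$ advance time by the same amount as in $\mathcal R$ (because $\texttt{mte}$ of the clock is \verb|INF| and so never tightens the bound coming from the rest of the configuration), and \verb|delta| only updates the clock's own attributes. Hence a tick step in $\mathcal R$ of duration $r$ lifts uniquely to a tick step of the same duration in $\tilde{\mathcal R}$, and vice versa.

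Second, I would show that the three replacement rules together \emph{cover} every instance where the original instantaneous rule fires, and that no replacement rule fires when the original does not. Concretely, I would do a case split on (i) whether the clock is \texttt{on} or \texttt{off} in the source state and (ii) the truth values of $p$ and $q$ in the source and target states. In each of the eight combinations, at least one of Rules (1)--(3) has its guard satisfied whenever $cond$ holds and the original rule applies; and conversely every Rule (i) has the original guard $cond$ as a conjunct, so it can only fire when the original rule does. The main subtlety I expect is the overlap between Rule (1) and Rule (3): when the clock is on, the next state satisfies $p\wedge q$, and the source state satisfies $p\vee\neg q$, both rules are enabled. This is harmless for the lemma's statement, which only asserts existence of a corresponding path, not uniqueness; whichever replacement rule is chosen, the projection onto the original state part coincides with the effect of the original rule.

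With these two facts in hand, the path correspondence is built by induction on the prefix length. For a path $\{t_0\}\xrightarrow{r_0}\{t_1\}\xrightarrow{r_1}\ldots$ in $\mathcal R$, I would take $\tilde{t}_0 = t_0\,o_{\textit{clock}}$ with $o_{\textit{clock}}$ the initial clock object, and at step $i$ pick either the lifted tick rule or (by the coverage argument) one of the three replacement rules, obtaining $\tilde{t}_{i+1}$ of the form $t_{i+1}\,o'_{\textit{clock}}$. The reverse direction is symmetric: an instantaneous step of $\tilde{\mathcal R}$ projects to the corresponding original instantaneous step by the second bullet above, and a tick step projects to a tick step of the same duration by the first bullet. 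The main obstacle is purely bookkeeping — verifying the coverage case analysis carefully — and no genuinely new proof technique beyond that used for Lemma~\ref{lem:bisimilar} is required.
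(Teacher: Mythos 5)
Your proposal is correct and follows essentially the same route as the paper, whose proof of this lemma simply appeals to the argument for Lemma~\ref{lem:bisimilar}: the clock object never constrains \texttt{mte} (it returns \texttt{INF}), and the replacement rules are enabled precisely when the original rule is and act on the original part of the state exactly as the original rule does, so paths lift and project step by step. Your extra observation that Rules (1) and (3) of the $\always$-transformation can both be enabled when the clock is on is accurate --- the ``exactly one new rule applicable'' phrasing from the proof of Lemma~\ref{lem:bisimilar} must be weakened here to ``at least one'' --- but, as you note, this only adds nondeterminism in the clock component and does not affect the existence claims in either direction.
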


\begin{proof}
Very similar to the proof of Lem.~\ref{lem:bisimilar}.
\end{proof}



\begin{theorem}
  Let $\mathcal R$ be a real-time rewrite theory, $L_\Pi$ a labeling
  function for $\mathcal R$ with $p,q \in \Pi$, and
  $\{t_0\}$ an initial state of $\mathcal R$. Let $\tilde{\mathcal
    R}$, $\tilde{L}_\Pi$, and $\{\tilde{t}_0\}$ be the result of the
  $\always$-transformation applied to $\mathcal R$, $L_\Pi$, and
  $\{t_0\}$. Then the following equivalence holds:
  \[ \mathcal R, L_\Pi, \{t_0\} \vDash \always(p \lor \always_{\leq b}
  q) \quad \Longleftrightarrow \quad \tilde{\mathcal R},
  \tilde{L}_\Pi, \{\tilde{t}_0\} \vDash \always(p \lor (q \weakuntil
  (clock > b))).
  \]
\end{theorem}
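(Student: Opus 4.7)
The plan is to mirror the structure of the preceding theorem's proof: prove each direction by contraposition, while using Lemma~\ref{lem:always:paths} to pass between corresponding paths $\pi$ in $\mathcal R$ and $\tilde \pi$ in $\tilde{\mathcal R}$. The real content of the argument is a single invariant linking the clock value to elapsed time along a path.

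The invariant I will establish is the following: at any state $\tilde t_l$ along $\tilde \pi$, the clock is on iff there is a largest $m^*$ with $0 < m^* \leq l$ at which the rule~(2) transition was taken (so $\tilde t_{m^* - 1} \vDash \neg p \land q$ and $\tilde t_{m^*} \vDash p \land q$) and every state $\tilde t_{m^*}, \tilde t_{m^* + 1}, \ldots, \tilde t_l$ satisfies $p \land q$; in that case the clock value equals $\min(\sum_{i = m^*}^{l-1} r_i,\ b+1)$. The invariant is preserved by ticks because $p$ and $q$ are tick-stabilizing (inherited from the component-based fragment) and the tick rule never changes the clock status, while \verb|delta| advances the value exactly by the tick duration modulo saturation at $b+1$. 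A corollary, obtained by tracing backward through any maximal run of ticks to the nearest instantaneous predecessor (or to the initial state), is that whenever $\neg p$ holds at $\tilde t_l$ the clock is off at $\tilde t_l$ with value $0$: rules~(2) and~(3) both require $p \land q$ in their target, so the only instantaneous way into a $\neg p$-state is rule~(1), which resets the clock.

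For the forward direction ($\Longrightarrow$), starting from a witness $(\tilde \pi, j)$ with $\neg p$ at $\tilde t_j$ and $\neg (q \weakuntil (clock > b))$ at $\tilde \pi^j$, unfolding the negation of weak until yields some $k \geq j$ with $\neg q$ at $\tilde t_k$ and $clock \leq b$ at every $\tilde t_j, \tilde t_{j+1}, \ldots, \tilde t_k$. The task is then to exhibit a pair $(j^*, k^\dagger)$ witnessing the failure of $\always(p \lor \always_{\leq b} q)$ along the corresponding path $\pi$ in $\mathcal R$. I will take $j^* := \max\{l \in [j, k] : \neg p \text{ at } \tilde t_l\}$, which exists since $l = j$ qualifies. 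If $\neg q$ already holds at $\tilde t_{j^*}$ set $k^\dagger := j^*$ (zero elapsed time). Otherwise $q$ at $\tilde t_{j^*}$ and the step into $\tilde t_{j^* + 1}$ is forced to be instantaneous (ticks cannot flip $p$); if its target satisfies $p \land \neg q$ take $k^\dagger := j^* + 1$, otherwise rule~(2) fires, the clock turns on at $\tilde t_{j^* + 1}$ with value $0$, and the first subsequent $\neg q$-state $\tilde t_{k^\dagger}$ is again reached by an instantaneous rewrite (tick-stabilization of $q$). In this last subcase the invariant gives $clock$-value at $\tilde t_{k^\dagger - 1}$ equal to $\sum_{i = j^* + 1}^{k^\dagger - 2} r_i$, which is $\leq b$ by assumption, so the elapsed time from $\tilde t_{j^*}$ to $\tilde t_{k^\dagger}$ is at most $b$; Lemma~\ref{lem:always:paths} transfers this violation back to $\pi$.

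For the backward direction ($\Longleftarrow$), a witness $(\pi, j')$ with $\neg p$ at $t_{j'}$ and $\neg q$ at $t_{k'}$ within time $b$ transfers to $\tilde \pi$ via Lemma~\ref{lem:always:paths}; choosing $k := k'$, it remains to show $clock \leq b$ at every $\tilde t_l$ for $l \in [j', k']$. This follows directly from the invariant: either the clock is off at $\tilde t_l$ (value $0$), or the clock was switched on by rule~(2) at some $m^* > j'$ (the corollary forbids it being on at $\tilde t_{j'}$) and its value equals the tick-accumulated time from $\tilde t_{m^*}$ to $\tilde t_l$, which is at most the total elapsed time from $t_{j'}$ to $t_l \leq t_{k'}$, hence $\leq b$. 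The main obstacle is the case analysis in the forward direction; the critical observation throughout both directions is that any flip of $p$ or $q$ arises from an instantaneous rewrite (by tick-stabilization), which is exactly what lets the clock invariant produce a tight bound on the original elapsed time.
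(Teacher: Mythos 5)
Your proof is correct to the same standard of rigor as the paper's and follows essentially the same route: contraposition in both directions, Lemma~\ref{lem:always:paths} to move between $\pi$ and $\tilde{\pi}$, the observation that a $\neg p$-state has the clock \texttt{off} with value $0$, and the identification of the last $\neg p$-state before the $\neg q$-witness (your $j^*$, the paper's $m$), at which rule~(2) fires and from which the clock records the elapsed time, bounded by $b$ via the $clock \leq b$ part of the counterexample. The only substantive difference is packaging: you isolate the clock behaviour as an explicit invariant proved by induction along the path, whereas the paper establishes the same facts in-line inside each direction. Two remarks. First, you justify ``ticks cannot flip $p$ or $q$'' by tick-stabilization, which is not a hypothesis of this theorem (it only enters later, in the completeness discussion of Section~\ref{sec:CompleteAndTerminate}); note, however, that the paper's proof tacitly uses the very same fact, e.g.\ when it asserts that a step into a $\neg p$-state ``must have been the rule~(1)'', silently excluding tick steps, so you are not deviating in substance but making the needed assumption visible. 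Second, your appeal to tick-stabilization of $q$ to force the step into the first $\neg q$-state $\tilde{t}_{k^\dagger}$ to be instantaneous is avoidable: if that step is a tick, the clock stays on through it, so its value at $\tilde{t}_{k^\dagger}$ equals the elapsed time since $\tilde{t}_{j^*}$, and the assumption $clock \leq b$ at $\tilde{t}_{k^\dagger}$ (available since $k^\dagger \leq k$) gives the bound directly; this is essentially how the paper closes the argument, using the clock value at the $\neg q$-state itself, so a one-line case split on the nature of that step would let you drop the extra assumption there.
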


\begin{proof}
  ``$\Longrightarrow$'': Let $\tilde{\pi} = \{\tilde{t}_0\} \xrightarrow{r_0}
  \{\tilde{t}_1\} \xrightarrow{r_1} \ldots$ be a path in
  $\tilde{\mathcal R}$. Assume $\tilde{\mathcal R}, \tilde{L_\Pi},
  \{\tilde{t_0}\} \not\vDash \always(p \lor (q \weakuntil (clock > b)))$, and we show $\mathcal R, L_\Pi, \{t_0\} \not\vDash
  \always(p \lor \always_{\leq b} q)$. By assumption,
  \begin{align*}\exists j \geq 0.(\tilde{\pi}^j \not\vDash
    p)\land(\exists k\geq j.&(\tilde{\pi}^k\not\vDash q) \land
    (\tilde{\pi}^k \not\vDash clock > b)\land\\
    &(\forall l. j\leq l < k \Rightarrow (\tilde{\pi}^l \vDash q)
    \land (\tilde{\pi}^l \not\vDash clock > b))).
  \end{align*}
  Let $j \in \mathbb N$ be the minimal index satisfying the above formula.
  We know $\tilde{\pi}^j\not\vDash p$, if in addition
  $\tilde{\pi}^j\not\vDash q$ then we are finished because obviously $\tilde{\pi}^j\not\vDash p \lor \always_{\leq b}
  q$. So assume
  $\tilde{\pi}^j\vDash q$. If $j=0$ then the clock status is
  \verb|off| and the clock value is $0$; if $j>0$ then the last
  rewrite step
  $\{\tilde{t}_{j-1}\}\xrightarrow{r_{j-1}}\{\tilde{t}_j\}$ must have been
  the rule (1) since $\neg p$ is satisfied in $\tilde{t}_j$, i.e.~the clock status in $\tilde{t}_j$ is
  \verb|off| and the clock value is $0$. Hence, in any case, we know
  that in $\tilde{t}_j$ the clock status is \verb|off| and the
  clock value is $0$. Let $k > j$ 
  be the minimal index such that
  $\tilde{\pi}^k \not \vDash q$, $\tilde{\pi}^k \not\vDash clock >
  b$, and $\forall l. j\leq l < k \Rightarrow (\tilde{\pi}^l \vDash q)
  \land (\tilde{\pi}^l \not\vDash clock > b)$; such a $k$ exists by assumption.

  We know that $p$ is not satisfied in $\tilde{t}_j$. If $p$ is not
  satisfied for all states $\tilde{t}_m$ for $j\leq m < k$ then we can
  just take the state $\tilde{t}_{k-1}$ as an counterexample, that is
  $\tilde{\pi}^{k-1} \not\vDash p \lor \always_{\leq b} q$.  So assume
  that there exists a maximal $m$ with $j \leq m < k$ such that
  $\tilde{t}_m$ does not satisfy $p$. Furthermore we can assume that
  between $m$ and $k$ there is at least one tick rule since otherwise,
  in state $\tilde{t}_m$ the proposition $p$ is not satisfied, but
  also $q$ is not satisfied in $\tilde{t}_k$ which is reachable in
  zero time. It follows that in this case $\tilde{\pi}^m$ does not
  satisfy $p \lor \always_{\leq b} q$ and hence there must be a tick
  rule between $\tilde{t}_m$ and $\tilde{t}_k$, and moreover, $m \leq
  k - 3$ (after the $m$th state there must be an instantaneous step
  changing $\neg p$ to $p$, one application of the tick rule, and an
  instantaneous step changing $q$ to $\neg q$).

  It follows that the rewrite step
  $\{\tilde{t}_m\}\xrightarrow{r_m}\{\tilde{t}_{m+1}\}$ switches the
  clock on ($\tilde{t}_m$ satisfies $\neg p$ and $q$, and
  $\tilde{t}_{m+1}$ satisfies $p$ and, by the above observation that
  $m \leq k-3$, also $q$). From the state $\tilde{t}_{m+1}$ on the
  clock counts and for all subsequent states up to $\tilde{t}_k$ the
  clock value equals the duration of the tick rules between
  $\tilde{t}_m$ and $\tilde{t}_k$.  Since, by assumption,
  $\tilde{\pi}^k \vDash clock \leq b$, it follows $\sum_{l=m}^{k-1}r_l
  \leq b$ and we can conclude that in $\tilde{\pi}^m$ the formula $p
  \lor \always_{\leq r} q$ is not satisfied: $p$ is not satisfied in
  $\tilde{t}_m$ and moreover, $q$ does not hold for all states
  reachable within time $b$. Thus $\tilde{\pi} \not\vDash \always(p
  \lor \always_{\leq b} q)$, and since $\tilde{\pi}$ was an arbitrary
  path in $\tilde{\mathcal R}$ with initial state $\{\tilde{t}_0\}$ it
  follows from Lemma~\ref{lem:always:paths} that $\mathcal R, L_\Pi,
  \{t_0\} \not\vDash \always(p \lor \always_{\leq b} q)$.

%

  ``$\Longleftarrow$'': Assume $\mathcal R, L_\Pi, \{t_0\} \not\vDash
  \always(p \lor \always_{\leq b} q)$, and we show $\tilde{\mathcal
    R}, \tilde{L}_\Pi, \{\tilde{t}_0\} \not\vDash \always(p \lor (q
  \weakuntil (clock > b)))$. Let $\pi = \{t_0\}
  \xrightarrow{r_0} \{t_1\} \xrightarrow{r_1} \ldots$ be a path in
  $\mathcal R$. By assumption 
	we know
  \begin{equation*}
    \exists j \geq 0.\left(\tilde{\pi}^j \not\vDash p\right) \land
    \left(\exists k \geq j.(\tilde{\pi}^k \not\vDash q) \land
      \sum_{l=j}^{k-1} r_l \leq b\right).\tag{1}
  \end{equation*}
By   Lemma~\ref{lem:always:paths} there exists a (unique) path $\tilde{\pi}$ in $\mathcal R$
satisfying formula (1) (where $\pi$ is replaced by $\tilde{\pi}$).
  Let $j\geq 0$ and $k\geq j$
  be the minimal indices satisfying (1). If $\tilde{\pi}^j \not\vDash q$ then we
  are finished. Now assume $\tilde{\pi}^j \vDash q$.
In $\tilde{t}_i$ the proposition $p$ is not satisfied implying that the clock
  status in $\tilde{t}_j$ is \verb|off| and the clock value is
  $0$. It is clear that the clock value in all states between $\tilde{t}_j$ and $\tilde{t}_k$ is at most the sum of the duration of the tick steps between
  $\tilde{t}_j$ and $\tilde{t}_k$; moreover, by assumption, for all $m \in \mathbb N$ with $j < m \leq k$ it holds $\sum_{l=j}^{m-1}r_l \leq b$. It follows
  that $\tilde{\pi}^l \not\vDash clock > b$ for all $j\leq l \leq
  k$. Hence $\tilde{\pi}^j\not\vDash p \lor (q \weakuntil (clock
  > b))$, and thus we have shown that $\tilde{\mathcal R},
  \tilde{L_\Pi}, \{\tilde{t_0}\} \not\vDash \always(p \lor (q \weakuntil (clock > b)))$.
\end{proof}

\subsection{Completeness and Termination}
\label{sec:CompleteAndTerminate}

The strength of Real-Time Maude is clearly the expressiveness and the
generality of the systems that can be specified, and moreover,
powerful analysis techniques by simulation of specifications. However,
the drawback of modeling in Real-Time Maude is the fact that, since we
are dealing with general classes of infinite-state real-time systems,
formal analyses are in general \emph{incomplete}, and sometimes even
\emph{unsound}. In Real-Time Maude, on the one hand, an analysis
method is called \emph{sound} if any counterexample found by this
method is a real counterexample in the system. On the other hand, an
analysis method is called \emph{complete} if the fact that no
counterexample is found using this method actually implies that no
such counterexample exists. For instance, the LTL model checking of a
formula $\phi$ is sound, if any counterexample found by the model
checker is a real counterexample in the system. LTL model checking of
a formula $\phi$ is complete, if the fact that the model checker
responds that the formula is satisfied, the formula is actually
satisfied by the system, i.e.~there exists no counterexample
falsifying $\phi$.

\paragraph{Sound and complete model checking of time-bounded formulas}\label{sec:completeness}
In \cite{DBLP:journals/entcs/OlveczkyM07a} {\"O}lveczky and Meseguer
have characterized easily checkable conditions for specifications in
Real-Time Maude which imply soundness and 
completeness of LTL model checking under
the maximal time sampling strategy. Given a real-time rewrite theory
$\mathcal R$, a labeling function $L_\Pi$ with $\Pi$ atomic
propositions, then model checking an LTL formula $\phi$ with the
maximal time sampling strategy is sound and complete, if
(1) $\mathcal R$ is \emph{time-robust}, and (2) all atomic propositions in $\Pi$ are \emph{tick-stabilizing}.
Time-robustness of real-time rewrite theory intuitively means that
time can either advance by any amount, by any amount up to and
including a specific point in time, or not at all (and this property
is not affected by advancing time unless we reach the specific time
bound in the second case), and instantaneous rules can only be applied
when the system has advanced time by the maximal possible amount.
The second condition for sound and complete model checking is that all
atomic propositions are tick-stabilizing which means that they do not
change arbitrarily during a maximal time step, more precisely, tick-stabilizing state propositions are allowed to change not at all
during a maximal time step, or only once. For exact definitions see~\cite{DBLP:journals/entcs/OlveczkyM07a}.

As our goal is to achieve soundness and completeness of model checking generalized time-bounded response and time-bounded safety MTL formulas, it is essential that time-robustness is preserved by both $\eventually$- and $\always$-transformation.

\begin{theorem}\label{thm:time-robust}
  Let $\mathcal R$ be a real-time rewrite theory and let
  $\tilde{\mathcal
    R}$ be the result of the $\eventually$- or
  $\always$-transformation applied to $\mathcal R$.
  If $\mathcal R$ is time-robust, then $\tilde{\mathcal R}$ is time-robust.
\end{theorem}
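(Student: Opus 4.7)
The plan is to verify that both conditions in the Ölveczky–Meseguer definition of time-robustness are preserved under the two transformations: (a) the structural shape of the tick rule (time can advance by any amount up to, and in particular by, the maximal time elapse \verb|mte|, and this is not disturbed by time-elapse itself), and (b) the restriction that instantaneous rules only fire when the system has advanced time by the maximal possible amount. Since both transformations only augment the state with a fresh \verb|Clock| object and refine the existing instantaneous rules by case distinction on its status, I would argue that neither the timing skeleton nor the firing-time discipline of $\mathcal R$ is altered.

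First I would inspect the tick rule. In both transformations the tick rule is left verbatim; only \verb|delta| and \verb|mte| are extended to the clock object. The extension of \verb|mte| returns \verb|INF| for every clock instance, so $\mathit{mte}(C\,o_{clock}) = \mathit{mte}(C)$ by the distribution of \verb|mte| over configurations. Thus the set of admissible tick durations from $\{\tilde t\}$ in $\tilde{\mathcal R}$ coincides with that from $\{t\}$ in $\mathcal R$. The extension of \verb|delta| is a well-defined deterministic update of the clock attribute (clamped at $b_{\max}+1$ resp.\ $b+1$ when \verb|status : on|, and the identity when \verb|status : off|), so \verb|delta| remains total and compatible with repeated tick application. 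Together these show that the tick behaviour of $\tilde{\mathcal R}$ is a conservative extension of that of $\mathcal R$, and in particular the ``any amount up to \verb|mte|'' and the stability-under-time-elapse clauses of time-robustness transfer directly.

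Second I would show that each new instantaneous rule inherits the maximum-time-sampling applicability of the original rule it replaces. By the time-robustness assumption on $\mathcal R$, every original instantaneous rule can fire only when time has been advanced to its maximal point (in the component-based style of Section~\ref{sec:modeling_example}, this is enforced via timer expiration). Each of the four (resp.\ three) replacement rules in the $\eventually$- (resp.\ $\always$-) transformation has the form ``original left-hand side augmented with a clock object $\Rightarrow$ original right-hand side with updated clock, under original condition $\land$ additional state-only condition on clock status and \verb|modelCheck|-evaluated propositions.'' The additional conditions refer only to the present state and to atomic propositions, not to time; in particular they neither weaken nor strengthen the temporal trigger of the underlying rule. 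Hence a replacement rule fires in $\tilde{\mathcal R}$ exactly when its parent rule would have fired in $\mathcal R$, restricted further by a clock-status guard that partitions the applicability across the different copies.

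The main obstacle is not technical difficulty but careful alignment with the exact formulation of time-robustness in \cite{DBLP:journals/entcs/OlveczkyM07a}, in particular checking that the cases of the replacement rules cover every reachable extended state (so that an original enabling in $\mathcal R$ is not silently lost in $\tilde{\mathcal R}$) and that they are mutually exclusive (so that the maximal-time discipline is not bypassed by spurious enabling). Both properties follow from a straightforward inspection of the guards: the four $\eventually$-cases are determined by the two-way split on clock status together with the value of $\bigvee_i q_i$ on the target (resp.\ the corresponding bound-constrained conjunction), and analogously the $\always$-cases are determined by clock status together with the valuations of $p$ and $q$ on source and target. Once this case analysis is spelled out, the conclusion that $\tilde{\mathcal R}$ is time-robust whenever $\mathcal R$ is follows immediately.
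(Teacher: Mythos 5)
Your argument is correct, but it proceeds by a different route than the paper. The paper disposes of this theorem in one sentence by appealing to Lemma~\ref{lem:bisimilar} and Lemma~\ref{lem:always:paths}: since the transformations ``do not change the original timed behavior'' of $\mathcal R$, time-robustness transfers. You instead verify the defining conditions of time-robustness directly on the transformed theory: the tick rule is untouched and \texttt{mte} of the clock is \texttt{INF}, so admissible tick durations and the effect of \texttt{delta} on the original part coincide with those of $\mathcal R$; and every replacement instantaneous rule is the original rule further guarded by state-only conditions on the clock status and on \texttt{modelCheck}-evaluated propositions, with the guards covering all clock statuses, so some replacement rule is enabled in an extended state exactly when the original rule is enabled in its projection. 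This is essentially the content of the \emph{proofs} of the two lemmas rather than of their statements, and your version is arguably the more careful one: time-robustness is a structural property about tick-step decomposition and about where instantaneous rules may fire, and it does not follow formally from the bare path-correspondence statements the paper cites --- it follows from the observations inside their proofs, which is exactly what you spell out. The paper's route buys brevity and reuse of already-stated lemmas; yours buys a self-contained argument that makes explicit why the firing discipline (instantaneous rules only at maximal time advance) survives the refinement of the rules.

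One small inaccuracy: your claim that the replacement rules are mutually exclusive does not hold for the $\always$-transformation. When the clock is \texttt{on}, rule~(1) (guard: target satisfies $\neg p \lor \neg q$, or source satisfies $p \lor \neg q$) and rule~(3) (guard: target satisfies $p \land q$) can both be enabled, e.g.\ when the target satisfies $p \land q$ and the source satisfies $p$. This only introduces nondeterminism in the clock component and is harmless here: for time-robustness you need coverage (no original enabling is lost) and refinement (each new guard conjoins the original condition, so no spurious enabling), both of which you establish; mutual exclusivity is not required, so the overstatement does not affect the validity of your conclusion.
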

\begin{proof}
  This assertion is proved by the observation that, according to
  Lemma~\ref{lem:bisimilar} and \ref{lem:always:paths}, both
  transformations do not change the original timed behavior of
  $\mathcal R$.
\end{proof} 

We will now sketch a proof that model checking generalized time-bounded response and time-bounded safety MTL formulas, with
tick-stabilizing atomic propositions, with the real-time system
specification for our case study as described in the previous sections
is indeed sound and complete.

%

\begin{theorem}
Let $\mathcal R$ be a component-based real-time rewrite theory $\mathcal R$ of the form described in Section~\ref{sec:modeling} and let $\phi$ be a generalized time-bounded response MTL formula or a time-bounded safety MTL formula with tick-stabilizing atomic propositions.
Then time-unbounded model checking of the transformed formula $\tilde{\phi}$ w.r.t.\ the transformed theory $\tilde{\mathcal R}$ is sound and complete for the maximal time sampling strategy.
\end{theorem}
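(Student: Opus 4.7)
The plan is to derive the conclusion from the Ölveczky--Meseguer characterization cited in Section~\ref{sec:completeness}: it suffices to establish (a) time-robustness of $\tilde{\mathcal R}$ and (b) tick-stabilization of every atomic proposition that occurs in $\tilde\phi$. Termination of the LTL model checker will then follow once we also argue that the reachable state space of $\tilde{\mathcal R}$ is finite.

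For (a), I would first check that the component-based theory $\mathcal R$ of Section~\ref{sec:modeling} is itself time-robust. The only tick rule has the standard shape $\{C\} \Rightarrow \{\mathtt{delta}(C,T)\}$ in time $T \leq \mathtt{mte}(C)$, and by the design discipline emphasized at the end of Section~\ref{sec:modeling_example}, every non-generic instantaneous rewrite rule is guarded by an expired timer; the generic \texttt{[transmit]} and delegate rules fire whenever a connected pair of ports disagrees, which must occur before any further time can pass (otherwise $C$ would not be consistent and the tick rule would be blocked by its $\mathtt{consistent}(C)$ guard). This is precisely the shape of rules required by the time-robustness criterion of~\cite{DBLP:journals/entcs/OlveczkyM07a}. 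Theorem~\ref{thm:time-robust} then lifts time-robustness to $\tilde{\mathcal R}$ for both the $\eventually$- and $\always$-transformation.

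For (b), the propositions in $\tilde\phi$ are the original $p,q,q_i$, which are tick-stabilizing by hypothesis, together with the new proposition $\mathtt{clockLeq}(b)$. I would show that $\mathtt{clockLeq}(b)$ is tick-stabilizing by inspecting the equations for $\mathtt{delta}$ on a \texttt{Clock} object: while the clock is \texttt{on}, its value evolves as $\min(T+T', b+1)$, hence is monotonically non-decreasing and saturates at $b+1$; while it is \texttt{off}, it does not change at all. Consequently, within a single maximal time step the boolean $\mathtt{clock}\leq b$ either is constant or flips exactly once from \texttt{true} to \texttt{false}, which is the tick-stabilization condition.

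With (a) and (b) in place, the Ölveczky--Meseguer theorem yields soundness and completeness of time-unbounded LTL model checking of $\tilde\phi$ in $\tilde{\mathcal R}$ under the maximal time sampling strategy; combined with the equivalences proved in Sections~\ref{sec:eventually:algorithm} and~\ref{sec:always:algorithm}, this gives the claim for $\phi$ and $\mathcal R$. Termination of \texttt{mc} finally reduces to finiteness of the reachable state space: component structure is fixed up to the swap of connector sets during reconfiguration, each timer ranges over a finite set of values (delays plus $\infty$ and $0$), and the added clock contributes at most $b_{\max}+2$ values because of the saturation above. The main obstacle I expect is part~(a): matching our rule format precisely to the time-robust schema of~\cite{DBLP:journals/entcs/OlveczkyM07a}, in particular verifying that the generic transmit/delegate rules cannot be postponed across a tick and that the additional condition $cond$ allowed by our modified tick rule still leaves the theory in the time-robust class.
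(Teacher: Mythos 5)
Your proposal is correct and follows essentially the same route as the paper's proof: reduce to the \"Olveczky--Meseguer criterion, argue time-robustness of $\mathcal R$ from the timer/consistency discipline and lift it to $\tilde{\mathcal R}$ via Theorem~\ref{thm:time-robust}, and show tick-stabilization of the original propositions (via the path-correspondence lemmas) and of \texttt{clockLeq($b$)} from the saturating clock behavior. Your added discussion of termination and finiteness of the state space is treated by the paper in a separate paragraph rather than inside this proof, but it matches the paper's argument there as well.
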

\begin{proof}
According to ~\cite{DBLP:journals/entcs/OlveczkyM07a} it is sufficient to prove that $\tilde{\mathcal R}$ is time-robust and $\tilde{\phi}$ only has tick-stabilizing atomic propositions.

A component-based real-time rewrite theory
$\mathcal R$ is time-robust since every instantaneous rewrite rule is
triggered by the expiration of a timer, or by the fact that the system is inconsistent, i.e.\ at least two
connected ports are not equal in value which can only happen
after a previous instantaneous step.\footnote{Note that our tick rule
  deviates from other well-known examples of object-oriented real-time
  rewrite theories. Beside the condition that \texttt{mte} returns a
  value greater than $0$ we require that the system is consistent,
  i.e.~any connected ports are equal in value.} According to Theorem~\ref{thm:time-robust}, our
transformations described in Sect.~\ref{sec:analysis} preserve
time-robustness, so the transformed theory $\tilde{\mathcal R}$ is time-robust as well.

The second condition requires that all atomic propositions are
tick-stabilizing. By Lemma~\ref{lem:bisimilar} and \ref{lem:always:paths}, both transformations do not change the original time behavior of $\mathcal R$ hence all atomic propositions remain tick-stabilizing. Note that both transformations introduce a new (parameterized) atomic proposition \verb|clockLeq| which, however, is tick-stabilizing, since the truth of \texttt{clockLeq($b$)} for $b$ a time bound is not changed during a maximal time step, or only once.
\end{proof}

\paragraph{Termination}
\label{sec:Termination}
In general, real-time rewrite theories are infinite-state systems for
which model checking will not terminate. However, if we are dealing
with finite-state systems, model checking will terminate. More
precisely, in a real-time rewrite theory $\mathcal R$ with a fixed
time sampling strategy, if both the reachable state space of $\mathcal
R$ from an initial state $\{t_0\}$ and the number of different rewrite
durations in all possible paths in $\mathcal R$ from $\{t_0\}$ are
finite, MTL model checking (of generalized time-bounded response MTL formulas or of time-bounded safety MTL formulas)
terminates. So
if the reachable state space in $\mathcal R$ from an initial state
$\{t_0\}$ (under a fixed time sampling strategy) is finite, then the
reachable state space of the transformed real-time rewrite theory
$\tilde{\mathcal R}$ is finite. The main point in the proof of this fact is
that the clock value is never increased more than necessary: if it exceeds the upper bound ($b_{max}$, $b$, resp.) then it is not increased any more which does not change the truth of propositions of the form $clock \geq b$, and ensures that the state space remains finite. For a detailed proof of this fact for a slightly different transformation (which however follows the same schema) we refer the interested reader to the work of Lepri et al.~\cite{rtrts-Olveczky}.
Thus, in our case study, model checking generalized time-bounded response MTL formulas or time-bounded safety MTL formulas with the maximal time sampling
strategy will terminate.




\subsection{Model Checking the Requirements of Digital Advertising}
In this section, we briefly describe the analysis of our real-time
specification of the digital advertising scenario in Real-Time Maude
using the untimed LTL model checking command. The analysis has been
performed on a single core processor (3.2GHz
Intel\textsuperscript{\textregistered} Pentium 4) with 2 GB of RAM.

Note that the transformations described above require that the real-time
object-oriented specifications are applied to are \emph{flat}
specifications in which rewrites happen only in the ``outermost''
configuration, and no rewrite is possible for attribute values. Our 
real-time specification, however, is \emph{non-flat}
as we are dealing with arbitrarily nested, hierarchical
components. A simple solution to this problem is to adapt all
rewrite rules such that they can only be applied at the outermost
layer. For hierarchical components, this implies that the transmission rule
must be duplicated for each layer of the component system (in our case study,
for two layers). This replication is part of the future work on automatizing
our analysis approach.

As all atomic propositions introduced in the following are tick-stabilizing
and moreover, the real-time specification of our case study is time-robust,
all analysis carried out (using the maximal time sampling strategy) are
complete, i.e.~if the model checking command of a temporal logic formula
returns a positive result, then the formula is provably correct for all timed
paths of the real-time specification.

Now, we discuss the analysis of our case study. To recall its basic
functionality, the digital advertising system can be found in one of two
configurations: in the first configuration, the system allows the user to
interact with the displayed content, while the system displays autoactive
content in the second configuration.


\paragraph{Verification of the guarantees (G1) and (G2).}
The contract to be satisfied by the digital advertising system consists of two
guarantees: (G1) Being an interactive ad, the system should react to a
user in front of the display. (G2) The content displayed must change
at least every ten seconds: an advertising campaign using a
large-scale display should not waste its capabilities by showing
static content.

Verifying the system guarantee (G2) amounts to model check that always
eventually the system changes the content of the display. This can be
model checked by the command
\[\small \verb+(mc {initial} |=u [] <> imgChange .)+\]
where \verb|imgChange| holds if the value of the provided port
\verb|ENV.imgChange| of the environment component is true. However,
the guarantee (G2) requires more: the displayed content must change at
least every ten seconds, so the above LTL formula is obviously
insufficient. Instead, the following formula must be used:
\[\always(\eventually_{\leq 10000}\ \textnormal{``image is
  changing''})\] This formula expresses that the image changes within
ten seconds regardless of the current system configuration. It is
worthwhile to investigate, since it is not immediately clear that the
system guarantees this properties in both configurations and under
arbitrary reconfigurations.  This can be checked by applying the
transformations presented above and executing the command
\[\small \verb+(mc {initial_MC} |=u [] <> ( imgChange /\ clockLeq(10000) ) .)+\]
where \verb|initial_MC| is the transformed initial state.
The model checking command took 8 minutes to complete, and did not find
any counterexamples.

To verify (G1), we check the property
\[\always(\always_{\leq 800}
\textnormal{``person is in front''} \to \eventually_{\leq 1000} \textnormal{``configuration 1''})\] using
the transformations specified above. The property states
that if a person stays in front of the display for at least $800$ milliseconds,
the system will be found in interactive mode within one second. Hence,
the property specifies that the system always guarantees to react to a person
in front of it. It can be model checked with the command
\[\small 
\begin{array}{l}
\verb+(mc {initial_MC} |=u [] ( (<> ( ~ persThereIn /\+\\
\verb+clockLeq(800))) \/ (<> ( in-C1 /\ clockLeq(1000)))) .)+
\end{array}\]
Model checking this property took 6 minutes, and again no counterexample
was found in the model of our case study.

\paragraph{Verification of state steadiness (G3).}
In addition to guaranteeing the system contract, we must assure that the
system cannot exhibit a behavior in which reconfigurations are continuously
performed and consume the available computing resources. In order to guarantee
that configuration states are reasonably stable, two properties (G3) are checked
with the help of the above transformations:
\[\always(\textnormal{``reconfiguration triggered in conf. 1''}
\to \always_{\leq 200} \textnormal{``configuration 1''}),\] and 
similarly for configuration 2.
These two
properties state that the reconfiguration does not happen instantaneously, but
must take at least 200 ms to complete. These properties guarantee that the system does not oscillate between configurations and that 
reconfigurations leave enough resources for the actual system operation.

The model checking command for the translated first property  is 
the following:
\[\small \verb+(mc {initial_MC} |=u [] ( (~ reconfTriggeredInC1) \/ (in-C1 W (~ clockLeq(200)))) .)+\]
Executing this model checking command took 12 minutes; the second property
can be translated and model-checked analogously.

Altogether, it is possible to check all real-time contract guarantees G1, G2,  and
G3 with the help of transformations and the built-in untimed Maude
LTL model-checker.

\section{Related Work}\label{sec:related}
Pervasiveness and ubiquity of software systems is a topic that has
been researched for about two
decades \cite{weiser99,satyanarayananm2001,leahu08}. A more recent
stream of research is focusing on leveraging the new sources
information becoming available through ubiquity of systems,
i.e. bio-signals. The ultimate goal is to create biocybernetic
loops~\cite{DBLP:conf/cec/SerbedzijaF09} in which the system and the
user create a feedback loop by influencing each other's reactions,
adapting the environment in an nonobtrusive way to the needs and ideas
of the user without requiring explicit interaction. Emotional
computing~\cite{DBLP:journals/ijmms/Picard03} is one of the most known
manifestations of this principle, but cognitive, and physical aspects
can be considered as well in the creation of a biocybernetic loop.

Constructing pervasive user-centric applications
\cite{advertising09:beyer-mayer-kroiss-schroeder} and, more general,
the construction of self-adaptive applications have been a field of
active research in recent years. In this work, we follow an approach
based on reconfiguration of the system in order to achieve
adaptability; \cite{DBLP:conf/woss/BradburyCDW04} gives a decent
overview of various approaches. Formal specification and verification
of component-based systems and their reconfiguration is presented in
several works, e.g.\ in \cite{BBB:TR06}, a logic-based approach to the
specification of reconfiguration is developed, and in
\cite{DBLP:journals/entcs/BarrosHM06}, reconfigurable components are
verified by model checking formulas of the $\mu$-calculus. However,
none of these frameworks for the verification of systems under
reconfiguration use time semantics and therefore, only untimed
properties can be verified.

Specifying systems with metric temporal logic goes back to the work of
Koyman in \cite{DBLP:journals/rts/Koymans90} and Hooman in
\cite{hoomanWidom89parle,hooman87parle}; in the latter work, a
compositional approach to the verification of system components with
metric temporal logic is presented. However, our work differs from the
above works by using a dynamic architecture instead of a static one.

In a previous work \cite{DBLP:conf/snpd/Olveczky08} on specification
and verification of systems in Real-Time Maude
\cite{DBLP:conf/maude/2007} already include ideas and methods how to
verify \emph{timed} temporal logic formulas using the LTL model
checker of Maude. However, the first automatized transformational
approach is presented in \cite{rtrts-Olveczky}, which cover MTL
formulas expressing the bounded response property or the minimum
separation property. In this work, we have extended the ideas of
\cite{rtrts-Olveczky} and presented analysis algorithms for two
further and more general classes of MTL formulas.






\section{Concluding Remarks}\label{sec:concluding}
In the previous sections we have presented a new approach for formally
modeling and analyzing pervasive user-centric applicatons at an early
design stage.  A system is modeled as a set of components which
interact via connectors between provided and required ports. To allow
adaptation of the system to new situations, the system can be
dynamically reconfigured by changing the connections at runtime.

For specifying and prototyping such systems in a real-time setting, we
use the algebraic rewriting language Real-Time Maude.  Time-dependent
system properties are expressed in Metric Temporal Logic (MTL).
Real-Time Maude is also well-suited for model checking two practically
important classes of formulas, the so-called generalized time-bounded
response MTL formulas and the time-bounded safety MTL formulas.  By
extending the component-based Real-Time Maude models with suitable
clocks and by transforming these kinds of MTL formulas into pure LTL
formulas over the extended specification we have shown that these two
classes of formulas can be analyzed with the (untimed) Maude LTL model
checker, and that this analysis is sound, complete and terminating for
the maximal time sampling strategy.

As case study we have specified a simple adaptive advertising scenario
in Real-Time Maude and could automatically verify all three
requirements (G1--G3) with the Maude model checker by using our
analysis method. However, the execution of the model checking command
took in all cases several minutes although we had already abstracted
all values to boolean data. For more complex case studies, further
optimizations will be necessary to make model checking a practically
feasible analysis method.  One simple, but efficient technique is to
replace each model checking command, \texttt{mc} say, of form
\texttt{modelCheck({t' REST}, q)} in a condition of a rule of the
extended theory $\tilde{\mathcal R}$ by a boolean expression; indeed,
each \texttt{q} is a state formula which can easily defined as a
boolean function \texttt{is-q} such that \texttt{is-q({t' REST})} is
true iff \texttt{mc} is true. Another technique is to reduce the
nondeterminism in hierarchical components by directly connecting the
ports of the environment with their corresponding ports of the
subcomponents (e.g. ENV.personThereIn with Camera.personThereIn). The
resulting specification, $\tilde{\tilde{\mathcal R}}$ say, is
stuttering equivalent (see e.g.~\cite{Mart�-oliet05theoroidalmaps})
with the original one; model checking $\tilde{\tilde{\mathcal R}}$ is
a matter of seconds, not of minutes.

The metric temporal logic properties in this paper take only
non-trivial upper bounds into account; the lower bound of any interval
is \texttt{0}. A ''natural'' extension of our work will be the study
of metric properties over intervals with non-null lower
bounds. Another interesting future work will be models with
time-dependent probabilistic behavior. Pervasive user-centric
applications interface with the real world through sensors and
actuators, which may be unreliable. With a probabilistic real-time
framework, it would be possible to model this uncertain behavior of
the environment, and reason about the performance of pervasive
user-centric applications in these environments.

\bibliographystyle{eptcs} 
\bibliography{rtrtsbib}

\begin{thebibliography}{10}
\providecommand{\bibitemstart}[1]{\bibitem{#1}}
\providecommand{\bibitemend}{}
\providecommand{\bibliographystart}{}
\providecommand{\bibliographyend}{}
\providecommand{\url}[1]{\texttt{#1}}
\providecommand{\urlprefix}{Available at }
\providecommand{\bibinfo}[2]{#2}
\bibliographystart

\bibitemstart{reflect-project}
\emph{\bibinfo{title}{REFLECT: Responsive Flexible Collaborating Agent, EC
  Project IST-2007-215893}}.
\newblock \bibinfo{note}{\url{reflect.pst.ifi.lmu.de}, Last visited: June
  2010}.
\bibitemend

\bibitemstart{DBLP:journals/entcs/BarrosHM06}
\bibinfo{author}{Tom{\'a}s Barros}, \bibinfo{author}{Ludovic Henrio} \&
  \bibinfo{author}{Eric Madelaine} (\bibinfo{year}{2006}):
  \emph{\bibinfo{title}{{Verification of Distributed Hierarchical
  Components}}}.
\newblock {\sl \bibinfo{journal}{Electr. Notes Theor. Comput. Sci.}}
  \bibinfo{volume}{160}, pp. \bibinfo{pages}{41--55}.
\newblock \urlprefix\url{http://dx.doi.org/10.1016/j.entcs.2006.05.014}.
\bibitemend

\bibitemstart{BBB:TR06}
\bibinfo{author}{Alessandro Basso}, \bibinfo{author}{Alexander Bolotov},
  \bibinfo{author}{Artie Basukoski}, \bibinfo{author}{Vladimir Getov},
  \bibinfo{author}{Ludovic Henrio} \& \bibinfo{author}{Mariusz Urbanski}
  (\bibinfo{year}{2006}): \emph{\bibinfo{title}{{Specification and Verification
  of Reconfiguration Protocols in Grid Component Systems}}}.
\newblock \bibinfo{type}{Technical Report}, \bibinfo{institution}{Institute on
  Programming Model (WP3)}.
\newblock \bibinfo{note}{CoreGRID Technical Report, TR-0042}.
\bibitemend

\bibitemstart{ucpa09:beyer-hammer-kroiss-schroeder}
\bibinfo{author}{Gilbert Beyer}, \bibinfo{author}{Moritz Hammer},
  \bibinfo{author}{Christian Kroiss} \& \bibinfo{author}{Andreas Schroeder}
  (\bibinfo{year}{2009}): \emph{\bibinfo{title}{{A Component-Based Approach for
  Realizing User-Centric Adaptive Systems}}}.
\newblock In: {\sl \bibinfo{booktitle}{MOBILWARE Workshops}},
  \bibinfo{publisher}{Springer}, pp. \bibinfo{pages}{98--104}.
\newblock \urlprefix\url{http://dx.doi.org/10.1007/978-3-642-03569-2_10}.
\bibitemend

\bibitemstart{advertising09:beyer-mayer-kroiss-schroeder}
\bibinfo{author}{Gilbert Beyer}, \bibinfo{author}{Christoph Mayer},
  \bibinfo{author}{Christian Kroiss} \& \bibinfo{author}{Andreas Schroeder}
  (\bibinfo{year}{2009}): \emph{\bibinfo{title}{{Person Aware Advertising
  Displays: Emotional, Cognitive, Physical Adaptation Capabilities for Contact
  Exploitation}}}.
\newblock In: {\sl \bibinfo{booktitle}{1st Workshop on Pervasive Advertising at
  Pervasive 2009, Nara, Japan}}, pp. \bibinfo{pages}{13--16}.
\newblock \urlprefix\url{http://www.pervasiveadvertising.org}.
\bibitemend

\bibitemstart{DBLP:conf/woss/BradburyCDW04}
\bibinfo{author}{Jeremy~S. Bradbury}, \bibinfo{author}{James~R. Cordy},
  \bibinfo{author}{J{\"u}rgen Dingel} \& \bibinfo{author}{Michel Wermelinger}
  (\bibinfo{year}{2004}): \emph{\bibinfo{title}{{A survey of self-management in
  dynamic software architecture specifications}}}.
\newblock In: {\sl \bibinfo{booktitle}{WOSS}}, pp. \bibinfo{pages}{28--33}.
\newblock \urlprefix\url{http://doi.acm.org/10.1145/1075405.1075411}.
\bibitemend

\bibitemstart{DBLP:conf/maude/2007}
\bibinfo{editor}{Manuel Clavel}, \bibinfo{editor}{Francisco Dur{\'a}n},
  \bibinfo{editor}{Steven Eker}, \bibinfo{editor}{Patrick Lincoln},
  \bibinfo{editor}{Narciso Mart\'{\i}-Oliet}, \bibinfo{editor}{Jos{\'e}
  Meseguer} \& \bibinfo{editor}{Carolyn~L. Talcott}, editors
  (\bibinfo{year}{2007}): \emph{\bibinfo{title}{All About Maude - A
  High-Performance Logical Framework, How to Specify, Program and Verify
  Systems in Rewriting Logic}}, {\sl \bibinfo{series}{Lecture Notes in Computer
  Science}} \bibinfo{volume}{4350}. \bibinfo{publisher}{Springer}.
\bibitemend

\bibitemstart{hooman87parle}
\bibinfo{author}{Jozef Hooman} (\bibinfo{year}{1987}): \emph{\bibinfo{title}{{A
  Compositional Proof Theory for Real-Time Distributed Message Passing}}}.
\newblock In: {\sl \bibinfo{booktitle}{PARLE (2)}}, {\sl
  \bibinfo{series}{LNCS}} \bibinfo{volume}{259}, \bibinfo{publisher}{Springer},
  pp. \bibinfo{pages}{315--332}.
\newblock \urlprefix\url{http://dx.doi.org/10.1007/3-540-17945-3_18}.
\bibitemend

\bibitemstart{hoomanWidom89parle}
\bibinfo{author}{Jozef Hooman} \& \bibinfo{author}{Jennifer Widom}
  (\bibinfo{year}{1989}): \emph{\bibinfo{title}{{A Temporal-Logic Based
  Compositional Proof System for Real-Time Message Passing}}}.
\newblock In: {\sl \bibinfo{booktitle}{PARLE (2)}}, {\sl
  \bibinfo{series}{LNCS}} \bibinfo{volume}{366}, \bibinfo{publisher}{Springer},
  pp. \bibinfo{pages}{424--441}.
\newblock \urlprefix\url{http://dx.doi.org/10.1007/3-540-51285-3_56}.
\bibitemend

\bibitemstart{DBLP:journals/rts/Koymans90}
\bibinfo{author}{Ron Koymans} (\bibinfo{year}{1990}):
  \emph{\bibinfo{title}{Specifying Real-Time Properties with Metric Temporal
  Logic}}.
\newblock {\sl \bibinfo{journal}{Real-Time Systems}}
  \bibinfo{volume}{2}(\bibinfo{number}{4}), pp. \bibinfo{pages}{255--299}.
\bibitemend

\bibitemstart{kroegerTL}
\bibinfo{author}{Fred Kr{\"o}ger} \& \bibinfo{author}{Stephan Merz}
  (\bibinfo{year}{2008}): \emph{\bibinfo{title}{Temporal Logic and State
  Systems}}.
\newblock \bibinfo{publisher}{Springer}.
\bibitemend

\bibitemstart{leahu08}
\bibinfo{author}{Lucian Leahu}, \bibinfo{author}{Phoebe Sengers} \&
  \bibinfo{author}{Michael Mateas} (\bibinfo{year}{2008}):
  \emph{\bibinfo{title}{{Interactionist AI and the promise of ubicomp, or, how
  to put your box in the world without putting the world in your box}}}.
\newblock In: {\sl \bibinfo{booktitle}{International conference on Ubiquitous
  computing}}, \bibinfo{publisher}{ACM}, pp. \bibinfo{pages}{134--143}.
\bibitemend

\bibitemstart{rtrts-Olveczky}
\bibinfo{author}{Daniela Lepri}, \bibinfo{author}{Peter~Csaba {\"O}lveczky} \&
  \bibinfo{author}{Erika {\'A}brah{\'a}m} (\bibinfo{year}{2010}):
  \emph{\bibinfo{title}{Model Checking Classes of Metric LTL Properties of
  Object-Oriented Real-Time Maude Specifications}}.
\newblock {\sl \bibinfo{journal}{EPTCS}} \bibinfo{volume}{(in this volume)}.
\bibitemend

\bibitemstart{Mart�-oliet05theoroidalmaps}
\bibinfo{author}{Narciso Mart�-oliet}, \bibinfo{author}{Jos� Meseguer} \&
  \bibinfo{author}{Miguel Palomino} (\bibinfo{year}{2005}):
  \emph{\bibinfo{title}{Theoroidal maps as algebraic simulations}}.
\newblock In: {\sl \bibinfo{booktitle}{Recent Trends in Algebraic Development
  Techniques. 17th International Workshop, WADT 2004}},
  \bibinfo{publisher}{Springer}, pp. \bibinfo{pages}{126--143}.
\bibitemend

\bibitemstart{DBLP:conf/snpd/Olveczky08}
\bibinfo{author}{Peter~Csaba {\"O}lveczky} (\bibinfo{year}{2008}):
  \emph{\bibinfo{title}{Towards Formal Modeling and Analysis of Networks of
  Embedded Medical Devices in Real-Time Maude}}.
\newblock In: {\sl \bibinfo{booktitle}{SNPD}}, \bibinfo{publisher}{IEEE
  Computer Society}, pp. \bibinfo{pages}{241--248}.
\bibitemend

\bibitemstart{DBLP:journals/entcs/OlveczkyM07a}
\bibinfo{author}{Peter~Csaba {\"O}lveczky} \& \bibinfo{author}{Jos{\'e}
  Meseguer} (\bibinfo{year}{2007}): \emph{\bibinfo{title}{Abstraction and
  Completeness for Real-Time Maude}}.
\newblock {\sl \bibinfo{journal}{Electr. Notes Theor. Comput. Sci.}}
  \bibinfo{volume}{176}(\bibinfo{number}{4}), pp. \bibinfo{pages}{5--27}.
\bibitemend

\bibitemstart{DBLP:journals/lisp/OlveczkyM07}
\bibinfo{author}{Peter~Csaba {\"O}lveczky} \& \bibinfo{author}{Jos{\'e}
  Meseguer} (\bibinfo{year}{2007}): \emph{\bibinfo{title}{Semantics and
  pragmatics of Real-Time Maude}}.
\newblock {\sl \bibinfo{journal}{Higher-Order and Symbolic Computation}}
  \bibinfo{volume}{20}(\bibinfo{number}{1-2}), pp. \bibinfo{pages}{161--196}.
\bibitemend

\bibitemstart{DBLP:journals/ijmms/Picard03}
\bibinfo{author}{Rosalind~W. Picard} (\bibinfo{year}{2003}):
  \emph{\bibinfo{title}{Affective computing: challenges}}.
\newblock {\sl \bibinfo{journal}{Int. J. Hum.-Comput. Stud.}}
  \bibinfo{volume}{59}(\bibinfo{number}{1-2}), pp. \bibinfo{pages}{55--64}.
\newblock \urlprefix\url{http://dx.doi.org/10.1016/S1071-5819(03)00052-1}.
\bibitemend

\bibitemstart{satyanarayananm2001}
\bibinfo{author}{M.~Satyanarayanan} (\bibinfo{year}{2002}):
  \emph{\bibinfo{title}{Pervasive computing: vision and challenges}}.
\newblock {\sl \bibinfo{journal}{Personal Communications, IEEE}}
  \bibinfo{volume}{8}(\bibinfo{number}{4}), pp. \bibinfo{pages}{10--17}.
\newblock \urlprefix\url{http://dx.doi.org/10.1109/98.943998}.
\bibitemend

\bibitemstart{DBLP:conf/cec/SerbedzijaF09}
\bibinfo{author}{Nikola~B. Serbedzija} \& \bibinfo{author}{Stephen~H.
  Fairclough} (\bibinfo{year}{2009}): \emph{\bibinfo{title}{Biocybernetic loop:
  From awareness to evolution}}.
\newblock In: {\sl \bibinfo{booktitle}{IEEE Congress on Evolutionary
  Computation}}, \bibinfo{publisher}{IEEE}, pp. \bibinfo{pages}{2063--2069}.
\newblock \urlprefix\url{http://dx.doi.org/10.1109/CEC.2009.4983195}.
\bibitemend

\bibitemstart{weiser99}
\bibinfo{author}{Mark Weiser} (\bibinfo{year}{1999}): \emph{\bibinfo{title}{The
  computer for the 21st century}}.
\newblock {\sl \bibinfo{journal}{SIGMOBILE Mob. Comput. Commun. Rev.}}
  \bibinfo{volume}{3}(\bibinfo{number}{3}), pp. \bibinfo{pages}{3--11}.
\bibitemend

\bibliographyend
\end{thebibliography}

\end{document}